\newtheorem{theo}{Theorem} 
\newtheorem{lemma}[theo]{Lemma}
\newtheorem{corol}[theo]{Corollary}
\newtheorem{defn}[theo]{Definition}
\newtheorem{prop}[theo]{Proposition}
\newcommand{\C}{\mathcal{C}}
\newcommand{\E}{\mathcal{E}}
\newcommand{\G}{\mathcal{G}}
\newcommand{\Q}{\mathcal{Q}}
\newcommand{\f}{\mathbb{F}}
\title{Quantum Expander Codes}
\author{Anthony Leverrier\footnote{Inria, EPI SECRET, B.P. 105, 78153 Le Chesnay Cedex, France. Email: \texttt{anthony.leverrier@inria.fr}.},\quad Jean-Pierre Tillich\footnote{Inria, EPI SECRET, B.P. 105, 78153 Le Chesnay Cedex, France. Email: \texttt{jean-pierre.tillich@inria.fr}.},\quad Gilles Z\'emor\footnote{Mathematical Institute, Bordeaux University, France. Email: \texttt{zemor@math.u-bordeaux.fr}.}}
\date{\today}
\begin{document}

\begin{titlepage}

\maketitle
\thispagestyle{empty}

\begin{abstract}
We present an efficient decoding algorithm for constant rate quantum hypergraph-product LDPC codes which provably corrects adversarial errors of weight $\Omega(\sqrt{n})$ for codes of length $n$.
The algorithm runs in time linear in the number of qubits, which makes
its performance the strongest to date for linear-time decoding
of quantum codes. The algorithm relies on expanding properties, not
of the quantum code's factor graph directly, but of the factor graph
of the original classical code it is constructed from.
\end{abstract}

\end{titlepage}

\section{Introduction}
A quantum CSS code is a particular instance of a quantum stabilizer code, and can be defined by two classical binary linear codes $\C_X$ and $\C_Z$ in the ambient space $\f_2^n$, with the property that $\C_X^\perp\subset \C_Z$ and $\C_Z^\perp\subset \C_X$. In other words, the classical codes $\C_X$ and $\C_Z$ come together with respective parity-check matrices $H_X$ and $H_Z$ such that the linear space  $R_X=\C_X^\perp$ generated by the rows of $H_X$ is orthogonal to the row space $R_Z=\C_Z^\perp$ of $H_Z$, where orthogonality is with respect to the standard inner product. An {\em error pattern}
is defined as a couple $(e_X,e_Z)$, where $e_X$ and $e_Z$ are
both binary vectors. The decoder is given the pair of {\em syndromes} $\sigma_X=H_Xe_X^T$ and $\sigma_Z=H_Ze_Z^T$ and decoding succeeds if
it outputs, not necessarily the initial error pattern $(e_X,e_Z)$, but
a couple of the form $(e_X+f_X,e_Z+f_Z)$ where $f_X\in R_Z$ and $f_Z\in R_X$. 
See \cite{got1997} for the equivalence with the stabilizer formalism
and a detailed introduction to quantum coding.

If efficient quantum computing is to be achieved, it will come with a
strong error-correcting component, that will involve very fast decoders,
probably in not much more than linear time in the blocklength $n$. The likeliest candidates for this task are quantum LDPC codes: in the CSS case, an LDPC code is simply a code whose above parity-check matrices $H_X$ and $H_Z$ have row and column weights bounded from above by a constant. Among
recent developments,
the recent paper \cite{got} has shown how fault tolerant quantum
computation with constant multiple overhead can  be obtained,
and quantum LDPC codes are an essential component of the scheme, making
them possibly even more appealing. 

It is natural to hope that the success of classical LDPC codes, both in terms of performance and of decoding efficiency, can eventually be matched in the quantum setting. 
This agenda involves two major difficulties, however. The first one is that coming up with intrinsically good constructions of quantum LDPC codes is in itself a challenge. In particular the random constructions that can be so effective in the classical case do not work at all in the quantum case. Indeed if
one chooses randomly a sparse parity-check matrix $H_X$, then, precisely
since this gives a good classical code, there are no low-weight codewords
in the dual of the row-space of $H_X$ and therefore an appropriate matrix
$H_Z$ does not exist. A testament to this difficulty is given in the
introduction of \cite{TZ14} by way of a list of proposed constructions
of quantum LDPC codes from within the classical coding community that all yield constant minimum distances.
Presently, the known constructions of families of quantum LDPC codes that come with constant rates and minimum distances that grow with the qubit length can be reduced to essentially three constructions. The first consists of quantum codes based on tilings of two-dimensional hyperbolic manifolds (surfaces) that generalize Kitaev's toric code and originate in \cite{FML}. The minimum distance of these codes grows as $\log n$, where $n$ is the qubit length. A recent generalisation of this approach to $4$-dimensional hyperbolic geometry \cite{GL14} yields minimum distances that behave as $n^\epsilon$ where $\epsilon$ is larger than some unknown constant and not more than~$0.3$. Finally, the construction \cite{TZ14} yields codes of constant rate
with minimum distances that grow as
$n^{1/2}$. These codes are perhaps the closest to classical LDPC codes in spirit, since they are constructed by taking a properly defined product of a classical LDPC code with itself. We note that presently all known constructions of quantum codes, even if they are allowed to have
vanishing rate, fail to significantly break the $n^{1/2}$ barrier for the minimum distance and it is an intriguing open question as to whether there exist asymptotically good quantum LDPC codes (i.e. with constant rate and minimum distance linear in $n$).
We also make the side remark that Gottesman \cite{got} requires, for the purpose of fault-tolerant quantum computation, constant rate LDPC codes with good minimum distance properties that should behave well under some sort of adversarial error setting.

The second difficulty in attempting to match the achievements of classical LDPC coding, is to devise efficient decoding algorithms.
The vast majority of decoding algorithms developed for classical LDPC codes rely on iterative  techniques whose ultimate goal is to take decisions on individual bits. In the quantum setting, taking a
decision on an individual qubit is mostly meaningless: 
this is because a decoder who would try to recover the error vector exactly is doomed to fail, since it would be fooled by any error that spans only half a
stabilizer vector (a row vector $f_X$ of $H_X$ or $f_Z$ of $H_Z$) \cite{PC08a}.
The eventual error vector that one must look to output is therefore
only defined up to addition of stabilizer vectors, so that there is no ``correct value'' for a single qubit of
$e_X$ or $e_Z$ which can always be just as well $0$ or $1$. 

Decoding quantum LDPC codes requires therefore additional elements to the classical tool kit. Surface codes mentioned above come with efficient decoding algorithms, however they are very particular to the surface code structure,
which implies that the associated classical codes $\C_X$ and $\C_Z$ are cycle codes of graphs:
full decoding, which is NP-hard in general for linear codes, can be achieved for cycle codes of graphs in polynomial time (with the help of Edmonds' weighted matching algorithm \cite{edmonds}), and this strategy
(which does not really qualify as a local technique) yields a decoding scheme for the quantum code
that achieves vanishing error-probability for random errors. Unfortunately, this technique does not extend to other classes of LDPC codes, and in an adversarial setting is limited to correcting at most
$\log n$ errors, since the minimum distance of surface codes of constant rate can never surpass a logarithm of the
qubit length \cite{delfosse}. Very recently, an alternative decoding algorithm was proposed \cite{hastings2014decoding} 
for the 4-dimensional hyperbolic codes of Guth and Lubotzky that is devised to work in a probabilistic setting and for which its adversarial performance is unclear. The third class
of constant rate quantum codes with growing minimum distance, namely the codes \cite{TZ14}, had
no known decoding algorithm to go with it until the present paper whose prime objective is
to tackle this very problem.

In this paper we devise a decoding algorithm for the product codes \cite{TZ14} that runs in linear time and decodes an arbitrary pattern of errors of any weight up to a constant fraction of the minimum distance, i.e. $cn^{1/2}$ for some constant $c>0$. Our inspiration is the decoding algorithm
of Sipser and Spielman \cite{SS96} which applies to classical LDPC codes whose Tanner graph is an expander graph: recall that ``the'' Tanner graph (there may actually be several for the same code) is a bipartite
graph defined on the set of rows and the set of columns of a parity-check matrix for the code and which
puts an edge between row $i$ and column $j$ if the matrix contains a ``1'' in position $(i,j)$.
The quantum codes under consideration here are products (in a sense to be given precisely below)
of a classical LDPC code $\C$ with itself, and we take the original code $\C$ to be an expander code. The resulting
Tanner graphs of the two classical codes $\C_X$ and $\C_Z$ that make up the quantum code are not strictly speaking expander graphs, but they retain enough expanding structure from the original code
for a decoding algorithm to work. Arguably, this is the first time that an import from classical
LDPC coding theory succeeds in decoding a quantum LDPC code from a non-constant number of errors in an adversarial setting. There are some twists to the original Sipser-Spielman decoding scheme however,
since it guesses values of individual bits and we have pointed out that this strategy cannot carry through to the quantum setting. The solution is to work with {\em generators} rather than qubits:
the generators are the row vectors of $H_X$ and $H_Z$ (thus called because they correspond to generators of the stabilizer group of the code). At each iteration, the decoding algorithm looks for a pattern
of qubits inside the support of a single generator that will decrease the syndrome weight.

Our results also have some significance in the area of local 
testability.
Locally Testable Codes (LTC) play a fundamental role in complexity theory: they have the property that code membership can be verified by querying only a few bits of a word \cite{goldreich2010short}.
More precisely, the number of constraints not satisfied by a word should be proportional to the distance of the word from the code. 
Given their importance, it is natural to ask whether a quantum version of LTC exists, and to investigate their consequences for  the burgeoning field of quantum Hamiltonian complexity, which studies quantum satisfaction problems \cite{gharibian2014quantum}.

Quantum LT codes were recently defined in \cite{aharonov2013quantum}, and these hypothetical objects are mainly characterized by their \emph{soundness} (or \emph{robustness}) $R(\delta)$, i.e. the probability that a word at relative distance $\delta$ from the code violates a randomly chosen constraint. 

While we do not exhibit such quantum LT codes here, we construct codes which are robust for errors of reasonably low weight, up to a constant fraction of the minimum distance: see Corollary \ref{robust} below. Reaching beyond the regime of low weight errors appears to be much harder since it is well-known that the random expander codes at the heart of our construction are not locally testable \cite{ben2005some}.
Interestingly, for our construction, better expansion translates into greater robustness. This should be seen in contrast to results in Ref.~\cite{aharonov2013commuting}, \cite{aharonov2013quantum}, where good expansion (admittedly not of the same graph) appears to hurt the local testability of the quantum codes. We also remark
that in the very recent result of \cite{eldar2015quantum}, quantum
codes are constructed by applying \cite{TZ14} to classical LT codes,
which leads to an alternative form of robustness where errors with small syndrome weight correspond to highly entangled states. 

The remainder of this extended abstract is organized as follows:
Section~\ref{sec:mainresult} describes the code construction with its
expanding properties and states the main result. Section~\ref{basic-algo}
describes the basic decoding algorithm. Section~\ref{sec:analysis}
gives the main points of the analysis of the algorithm, and
states the robustness result. 
Section~\ref{sec:conclusion} gives some concluding comments.
The proofs of the more technical lemmas
are relegated to appendices.

\section{Expansion, code construction and main result}\label{sec:mainresult}

Let $G = (A \cup B,\E)$ be a biregular bipartite graph
with left (resp. right) degree equal to $\Delta_A$ (resp. $\Delta_B$). Let $|A| = n_A$, $|B|=n_B$,
  and suppose $n_B\leq n_A$, so that $\Delta_A\leq\Delta_B$.  We shall write 
$a \sim_G b$ (or more concisely, $a \sim b$ when $G$ is clear from context) to mean that the vertices $a$ and $b$ are adjacent in the graph $G$.
If $S$ is a subset of vertices, denote by $\Gamma(S)$ the
set of all neighbors of vertices of $S$.
Let us say that $G$ is $(\gamma_A,\delta_A)$-{\em left-expanding},
for some constants $\gamma_A, \delta_A>0$,
if for any subset
$S \subseteq A$ with $|S| \leq \gamma_A n_A$ we have $|\Gamma(S)| \geq (1-\delta_A)\Delta_A |S|$.
Similarly, we shall say that $G$ is $(\gamma_B,\delta_B)$-{\em right-expanding}
if for any subset
$S \subseteq B$ with $|S| \leq \gamma_B n_B$ we have $|\Gamma(S)| \geq (1-\delta_B)\Delta_B |S|$.
Finally we shall say that $G$ is $(\gamma_A, \delta_A,\gamma_B, \delta_B)$ left-right expanding
(or simply expanding) is it is both $(\gamma_A,\delta_A)$-{left-expanding} and
$(\gamma_B,\delta_B)$-{right-expanding}.

To any bipartite graph $G$ we may associate the $n_B\times n_A$ matrix $H$,
whose rows are indexed by the vertices of $B$, whose columns are indexed by the vertices of $A$,
and such that $H_{ij}=1$ if $i$ and $j$ are adjacent in $G$ and $H_{ij}=0$ otherwise.
A binary linear code $\C_G$ can be thus defined as the set of vectors $x$ of $\f_2^{n_A}$ such that $Hx^T=0$, i.e. $\C_G$ is the code with parity-check matrix $H$.
Conversely, any code $\C$ has a parity-check matrix $H$ and the binary matrix $H$
can in turn be viewed as an incidence relation between its rows and columns, i.e. a bipartite graph $G$: such a graph
is usually called ``the'' Tanner graph or the factor graph of the code $\C$. A code
has several factor graphs, because it has several parity-check matrices, but it is usually clear which one we are talking about.

Expansion and classical error-correction are connected through the following result of Sipser and Spielman~\cite{SS96}.

\begin{theo}\label{thm:SS}
  Let $G=(A\cup B,\E)$ be a $(\Delta_A,\Delta_B)$-biregular $(\gamma_A,\delta_A)$-left-expanding
  graph. Letting $\Delta_A$ and $\Delta_B$ be fixed and allowing $n_A$ to grow,
there exists a decoding algorithm for the associated code $\C_G$ that runs in time linear in the code length $n_A=|A|$, and that, under the condition $\delta_A<1/4$, corrects any pattern of at most $\frac 12\gamma_An_A$ errors.
\end{theo}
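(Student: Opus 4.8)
The plan is to analyze the classical ``bit-flipping'' (or ``flip'') algorithm of Sipser and Spielman. Identify the left vertices $A$ with the code's bit positions (the columns of $H$) and the right vertices $B$ with the parity checks (the rows of $H$). Given a received word at Hamming distance $|e|\le\frac12\gamma_An_A$ from a codeword, let $E\subseteq A$ denote the current discrepancy between the algorithm's working word and the true codeword; initially $E$ is the support of $e$. Call a check $b\in B$ \emph{unsatisfied} if it is adjacent to an odd number of vertices of $E$, and let $U$ be the number of unsatisfied checks. The algorithm repeatedly flips any bit $a\in A$ more than half of whose $\Delta_A$ incident checks are unsatisfied; flipping such an $a$ toggles every check around it and hence strictly decreases $U$, while changing $|E|$ by exactly one.

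The heart of the argument is a \emph{progress lemma}: whenever $0<|E|\le\gamma_An_A$, some bit is flippable. First I would lower-bound the number of \emph{unique neighbors} of $E$ (checks adjacent to exactly one vertex of $E$) by an edge-counting estimate. Since $E$ sends exactly $\Delta_A|E|$ edges into $\Gamma(E)$ while $|\Gamma(E)|\ge(1-\delta_A)\Delta_A|E|$ by left-expansion, at most $\Delta_A|E|-|\Gamma(E)|$ of these neighbors can have degree $\ge2$ into $E$, leaving at least $2|\Gamma(E)|-\Delta_A|E|\ge(1-2\delta_A)\Delta_A|E|$ unique neighbors. Every unique neighbor is unsatisfied, so the number of edges from $E$ to unsatisfied checks is at least $(1-2\delta_A)\Delta_A|E|$; averaging over $E$, some $a\in E$ has at least $(1-2\delta_A)\Delta_A$ unsatisfied incident checks. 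The hypothesis $\delta_A<1/4$ makes this exceed $\Delta_A/2$, so $a$ is flippable.

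Next I would establish a \emph{weight-containment} invariant guaranteeing that $E$ never escapes the expansion window $|E|\le\gamma_An_A$; this is the step I expect to be the main obstacle, since an individual flip can raise $|E|$. The idea is to play the monotonically decreasing potential $U$ against the lower bound $U\ge(1-2\delta_A)\Delta_A|E|>\frac12\Delta_A|E|$ (same unique-neighbor count) valid for every $E$ in the window. Initially $U_0\le\Delta_A|e|\le\frac12\gamma_An_A\Delta_A$, so if $|E|$ ever reached $\gamma_An_A$ we would have $U>\frac12\Delta_A\gamma_An_A\ge U_0$, contradicting that $U$ strictly decreased from $U_0$. Hence $|E|<\gamma_An_A$ at every step, and the progress lemma applies throughout.

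Finally, for termination, running time and correctness: each flip lowers the nonnegative integer $U$ by at least one, and $U_0=O(n_A)$, so the algorithm halts after $O(n_A)$ flips; maintaining adjacency lists together with a running count of each bit's unsatisfied checks makes every flip cost $O(1)$ (as $\Delta_A$ and $\Delta_B$ are constant), giving total time linear in $n_A$. The algorithm can halt only when $U=0$, since otherwise the progress lemma, valid because $|E|\le\gamma_An_A$, would supply a flippable bit. By the contrapositive of the progress lemma, $U=0$ together with $|E|\le\gamma_An_A$ forces $|E|=0$, so the working word equals the target codeword and decoding succeeds.
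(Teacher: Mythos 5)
Your proposal is correct and matches the intended argument: the paper itself does not prove Theorem~\ref{thm:SS} (it quotes the result from \cite{SS96}), but your reconstruction is precisely the standard Sipser--Spielman sequential bit-flipping analysis, and your edge-counting lower bound on unique neighbors is identical to the paper's Lemma~\ref{unique-exp}. The only point needing polish is the weight-containment step: $|E|$ is an integer while $\gamma_A n_A$ need not be, so rather than assuming $|E|$ ``reaches'' $\gamma_A n_A$ exactly, one should take the first step at which $|E|>\gamma_A n_A$ and apply the in-window bound at the preceding step, where $|E|=|E_{\mathrm{next}}|-1>\gamma_A n_A-1\ge 2|e|-1$ forces $|E|\ge 2|e|$ by integrality; then $U>\frac{1}{2}\Delta_A|E|\ge\Delta_A|e|\ge U_0$ yields the same contradiction, so this is a one-line repair rather than a gap.
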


Our objective is to derive a quantum analogue of Theorem~\ref{thm:SS}. The codes we shall
work with are the codes of \cite{TZ14}, whose construction we briefly recall. As mentioned
in the introduction, a CSS code is defined by two classical codes $\C_X$ and $\C_Z$: in our case,
both these classical codes are constructed from a fixed bipartite graph $G=(A\cup B,\E)$.
Let us describe $\C_X$ and $\C_Z$ through their factor graphs $\G_X$ and $\G_Z$. 

The bipartite graph $\G_X$ has left set of vertices $A^2\cup B^2$, and its right set of vertices is $A\times B$. The bipartite graph $\G_Z$ has the same left vertices but its set of
right vertices is $B\times A$. We will find it convenient to denote vertices of 
$\G_X$ and $\G_Z$ by pairs of letters, omitting parentheses to lighten notation, and
to use Greek letters for right vertices of $\G_X$, Latin letters for right vertices of
$\G_Z$, and denote elements of $A^2$ by a Greek letter followed by a Latin letter, and
elements of $B^2$ by a Latin letter followed by a Greek letter. Typical elements of
$A^2,B^2,A\times B,$ and  $B\times A$ will therefore be written respectively, $\alpha a$, $b\beta$, $\alpha\beta$, and $ba$. The incidence structure of the two graphs is defined as follows:
\begin{itemize}
\item In $\G_X$: the set of neighbors of left vertex $\alpha a\in A^2$ is defined as 
      $$\Gamma(\alpha a)=\{\alpha\beta\in A\times B, a\sim_G \beta\}.$$ 
      The set of neighbors of
      left vertex $b\beta\in B^2$ is defined as
      $$\Gamma(b\beta)=\{\alpha\beta\in A\times B, \alpha\sim_G b\}.$$
\item In $\G_Z$: the set of neighbors of left vertex $\alpha a\in A^2$ is defined as 
      $$\Gamma(\alpha a)=\{ba\in B\times A, \alpha\sim_G b\}.$$ 
      The set of neighbors of
      left vertex $b\beta\in B^2$ is defined as
      $$\Gamma(b\beta)=\{ba\in B\times A, a\sim_G \beta\}.$$
\end{itemize}
The corresponding parity-check matrices $H_X$ and $H_Z$ of $\C_X$ and $\C_Z$ may therefore
be written in concise form as:
\begin{align}
H_X &= \left(\mathbbm{1}_{n_A} \otimes H, H^T \otimes \mathbbm{1}_{n_B} \right),\\
H_Z &= \left(H \otimes \mathbbm{1}_{n_A}, \mathbbm{1}_{n_B} \otimes H^T \right).
\end{align}

It is not difficult to check \cite{TZ14} that the rows of $H_X$ are orthogonal to the rows of $H_Z$, so that $(\C_X,\C_Z)$ makes up a valid CSS quantum code. We also see that
the parity-check matrices $H_X$ and $H_Z$ are low density, with constant row weight $\Delta_A+\Delta_B$.
It is proved furthermore
in \cite{TZ14} that the parameters of the quantum code $\Q_G=(\C_X,\C_Z)$ are
\begin{equation}
  \label{eq:parameters}
  [[n = n_A^2+n_B^2, k\geq (n_A-n_B)^2, \min(d,d^T)]]
\end{equation}
where $d$ denotes the {\em minimum distance} of the classical code
$\C_G$, i.e. the minimum number of columns of the $n_B\times n_A$ matrix $H$ that sum to zero, and $d^T$ stands for the associated
{\em transpose minimum distance}, which is the minimum number of {\em rows} of $H$ that sum to zero. This is with the convention that each
of these respective
minima is set to equal $\infty$ if there are no subsets of columns, or rows respectively, that sum to zero. The transpose minimum distance 
associated to a parity-check matrix of a classical linear code is not
a standard parameter because it is usually $\infty$, but this is not
necessarily the case for a matrix associated to a typical bipartite
biregular graph, and it cannot be totally overlooked.

Crucially, if one fixes the degrees $\Delta_A,\Delta_B$ with
$\Delta_A<\Delta_B$ and one takes an infinite family of graphs
$G=(A\cup B,\E)$ with increasing number of vertices $n_A,n_B$,
the rate of the quantum code $\Q_G$ is bounded from below by a non-zero constant and its typical minimum distance
scales as the square-root of its length $n$.

We remark that the construction of \cite{TZ14} is somewhat more general, using two
base graphs $G_1$ and $G_2$ rather than a single graph $G$. The above construction corresponds
to the case when $G=G_1=G_2$ and we choose to restrict ourselves to this setting
for ease of description and notation.

We can now state our main result:

\begin{theo}\label{thm:main}
 Let $G=(A\cup B,\E)$ be a $(\Delta_A,\Delta_B)$-biregular $(\gamma_A,\delta_A,\gamma_B,\delta_B)$-left-right-expanding
  graph. Assume the conditions $\delta_A<1/6$ and $\delta_B<1/6$.  
Letting $\Delta_A$ and $\Delta_B$ be fixed and allowing $n_A,n_B$ to grow,
there exists a decoding algorithm for the associated quantum code $\Q_G$ that runs in time linear in the code length $n=n_A^2+n_B^2$, and that decodes any quantum error pattern of weight less than
\begin{equation}
  \label{eq:n0}
  w_0=\frac{1}{3(1+\Delta_B)} \min \left(  \gamma_A n_A,  \gamma_B n_B\right).
\end{equation}
\end{theo}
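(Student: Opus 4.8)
The plan is to establish a quantum analogue of the Sipser--Spielman bit-flipping scheme (Theorem~\ref{thm:SS}) in the form of a \emph{small-set-flip} decoder, and to control it through the two-sided expansion of $G$. First I would reduce to correcting the $X$-part alone. The syndromes $\sigma_X,\sigma_Z$ decouple, and the pair $(\G_X,\G_Z)$ is symmetric under exchanging the roles of $A$ and $B$ (equivalently, replacing $H$ by $H^T$); since both the hypotheses $\delta_A,\delta_B<1/6$ and the weight bound \eqref{eq:n0} are invariant under this exchange, it suffices to analyse the decoding of $e_X$ from $\sigma_X$, using both left- and right-expansion of $G$. Writing an $X$-error as a pair of matrices $E_A\in\f_2^{A\times A}$, $E_B\in\f_2^{B\times B}$, one checks that its syndrome is the $A\times B$ matrix $\Sigma=E_AH^T+H^TE_B$, with an unsatisfied check at $(\alpha,\beta)$ exactly where $\Sigma(\alpha,\beta)=1$.

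The algorithm is then: maintain an estimate $\hat e$; at each step search, over all generators $g$ (rows of $H_Z$, each of constant support $\Delta_A+\Delta_B$) and over all subsets $F$ of $\mathrm{supp}(g)$, for a flip $\hat e\mapsto\hat e+\mathbbm{1}_F$ that strictly decreases the syndrome weight $|\Sigma|$; apply the best such flip and repeat, halting when none exists. A short computation shows that the qubits of generator $ba$ meet $E_A$ in a vertical segment of column $a$ (rows $\Gamma(b)$) and $E_B$ in a horizontal segment of row $b$ (columns $\Gamma(a)$), and that flipping a subset of them toggles arbitrary full rows and columns of the $\Delta_B\times\Delta_A$ submatrix $\Sigma|_{\Gamma(b)\times\Gamma(a)}$; so a generator move is precisely a row/column correction inside one such submatrix (a full generator toggles every row and every column and hence leaves $\Sigma$ unchanged, consistent with its being a stabilizer). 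In particular every single-qubit flip is available (take $F$ a singleton), but --- and this is the whole point of working with generators rather than qubits --- the product structure admits small configurations on which no single-qubit flip reduces the syndrome, and only a coordinated flip within one generator escapes them.

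I would then prove three statements. (Progress) If $\Sigma\neq0$ and the residual error $e+\hat e$ has coset-minimal weight below a threshold of order $\min(\gamma_An_A,\gamma_Bn_B)$, then some generator submatrix admits a weight-reducing row/column flip; equivalently, the decoder cannot stall before reaching $\Sigma=0$. (Containment) Throughout the execution the residual weight stays below that threshold: since each step lowers $|\Sigma|$ while changing the error by at most $\Delta_A+\Delta_B$ qubits, and since expansion forces $|\Sigma|$ to be at least a fixed positive multiple of the error weight as long as the latter is below threshold, the error weight cannot cross the threshold without contradicting the facts that $|\Sigma|$ starts at most $(\Delta_A+\Delta_B)|e|$ and only decreases; this is exactly where the constants $\tfrac13$ and $\tfrac{1}{1+\Delta_B}$ in \eqref{eq:n0} are calibrated. (Correctness) At termination the Progress and Containment statements force $\Sigma=0$, so $\hat e$ has the prescribed syndrome and $e+\hat e\in\C_X$; as the residual weight is below the quantum minimum distance $\min(d,d^T)$ of \eqref{eq:parameters}, we get $e+\hat e\in R_Z$, i.e.\ a valid decoding. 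Linear running time follows because each step touches $O(1)$ qubits and checks, the number of steps is at most the initial syndrome weight $O(|e|)$, and a local data structure confines each search to generators incident to currently unsatisfied checks.

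The hard part is the Progress statement, because $\G_X$ is not itself an expander and Theorem~\ref{thm:SS} cannot be applied directly. I would argue by contradiction: assuming no generator move helps, I would read off rigid structural constraints on $\Sigma$, and then feed the row-slices $E_A(\alpha,\cdot)\in\f_2^A$ and column-slices $E_B(\cdot,\beta)\in\f_2^B$ into the left- and right-expansion of $G$ respectively. Expansion below threshold yields many \emph{unique-neighbor} checks --- entries of $\Sigma$ produced by a single error qubit --- and each such unique neighbor places, in some generator submatrix, a line (row or column) carrying a strict majority of ones, which is precisely a reducing move and gives the contradiction; the same unique-neighbor estimate underlies the linear syndrome lower bound used in Containment. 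The delicate point, and the reason the classical bound $\delta<1/4$ must be strengthened to $\delta_A,\delta_B<1/6$, is that the two error parts $E_A$ and $E_B$ feed unsatisfied checks into the \emph{same} submatrices and can partially cancel one another; the unique-neighbor count must be shown to dominate this two-way interference, and keeping it positive costs the extra margin in the expansion parameter. I expect packaging this counting cleanly --- ideally after replacing $e+\hat e$ by a coset-minimal representative, so that its row- and column-slices are themselves below the classical expansion threshold --- to be the main technical effort.
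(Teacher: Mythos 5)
Your architecture is essentially the paper's: the same reduction to $X$-errors, the same flip-within-a-generator decoder, and the same analysis skeleton (progress from two-sided expansion, containment of the residual, correctness via Corollary~\ref{cor:dmin}), with the right intuitions about unique neighbors, interference between $E_A$ and $E_B$, coset-minimal representatives, and the origin of the $1/6$ condition. Your containment argument, however, is genuinely different and worth comparing. The paper's algorithm maximizes the ratio of syndrome decrease to flip weight; since Lemma~\ref{errordecreases} guarantees a flip of ratio at least $1/3$ whenever the reduced weight is below $\min(\gamma_A n_A,\gamma_B n_B)$, every executed flip has ratio at least $1/3$, and telescoping bounds the total flipped weight by $3|\sigma_X(e)|\leq 3\Delta_B|e|$ (Lemma~\ref{terminates}). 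Your threshold-crossing argument --- the reduced weight moves by at most $\Delta_A+\Delta_B$ per step, and just below the threshold Corollary~\ref{robust} would force $|\sigma_X(\varepsilon_i)|\geq\tfrac{1}{3}\left(\min(\gamma_A n_A,\gamma_B n_B)-\Delta_A-\Delta_B\right)$, contradicting monotonicity of the syndrome weight --- is also valid, and has the advantage of not requiring ratio-maximization in the algorithm (any strictly decreasing generator-local flip rule works). Two calibration remarks: you must track the \emph{reduced} weight throughout (the Hamming weight of the residual is not controlled by the syndrome, since sums of $Z$-generators are invisible to it), and the tight initial bound is $|\sigma_X(e)|\leq\Delta_B|e|$, not $(\Delta_A+\Delta_B)|e|$; with these choices your argument recovers the stated $w_0$ for $n$ large.

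The genuine gap is that your lemma skeleton does not close: Containment invokes a linear syndrome lower bound ($|\sigma_X|\geq \tfrac{1}{3}w_R$, the paper's Corollary~\ref{robust}), but your Progress statement --- mere \emph{existence} of a weight-reducing flip below threshold --- is too weak to imply it. Robustness is proved by iterating good flips until the syndrome vanishes, and this iteration is legitimate only if each flip decreases the syndrome by a constant fraction of its own weight \emph{and} does not increase the reduced weight of the residual; without the second property the iteration may exit the expansion regime before the syndrome reaches zero. This is precisely the strengthened statement of Lemma~\ref{errordecreases} ($|\sigma_X(e)|-|\sigma_X(e+e_1)|\geq|e_1|/3$ together with $w_R(e+e_1)\leq w_R(e)$), and nothing in your plan establishes it. Relatedly, your mechanism ``each unique neighbor places a line with a strict majority of ones'' only ever produces single-qubit flips, contradicting your own correct remark that single-qubit flips cannot suffice; when $E_A$ and $E_B$ interfere, a unique neighbor with respect to one projection need not yield any majority line. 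What is missing is the paper's technical core: the notion of a critical generator (Definition~\ref{def:critical}), its existence below threshold via unique-neighbor expansion applied to two nested projections (Lemma~\ref{goodgen}), and the four-case analysis over the coordinated flips $x_a\cup x_b$, $\bar{x}_a\cup\bar{x}_b$, $x_a\cup x_b\cup\chi_a\cup\chi_b$, from which both the constant $1/3$ and the reduced-weight control (via \eqref{eq:reduced}) are actually extracted. You correctly flag this as the main technical effort, but as written the plan asserts it rather than supplies it.
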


\paragraph{{\bf Comments.}}
Theorem~\ref{thm:main} implies in particular that the quantum
code $\Q_G$ must have a minimum distance proportional to
$\min \left(  \gamma_A n_A,  \gamma_B n_B\right)$, and that the
decoding algorithm corrects a number of adversarial errors equal to a constant fraction of the 
minimum distance. We stress that we need both left and right expansion from the graph $G$, and that for these values of $\delta_A$ and $\delta_B$, there are no known constructions that achieve it. However,
the graph $G$ with the required expanding properties may be obtained by random choice with classical
probabilistic arguments, as developed in \cite{Bas81} or \cite{SS96}.

\section{Description of the decoding algorithm}
\label{basic-algo}

We first need to describe the decoding problem precisely. 
An error pattern is a set of coordinates $E_X\subset\{1,\ldots ,n\}$
on which there is an $X$ Pauli error, together with a set of coordinates $E_Z$ on which there is a $Z$ Pauli error. There may be coordinates in
$E_X\cap E_Z$ on which an $X$ error and a $Z$ error occur simultaneously, which is equivalent to a $Y$ Pauli error.
The error pattern can therefore be given in equivalent form as the couple
of binary vectors $e=(e_X,e_Z)$ where $e_X$ and $e_Z$ have supports $E_X$ and $E_Z$ respectively.
The input to the decoder is the {\em syndrome $\sigma(e)$} of the error $e$.
It is made up of the classical syndromes of the vectors $e_X$ and $e_Z$ for the matrices $H_X$ and $H_Z$, i.e. $\sigma(e)=(\sigma_X(e_X),\sigma_Z(e_Z))$, with $\sigma_X(e_X)=H_Xe_X^T$ and $\sigma_Z(e_Z)=H_Ze_Z^T$. 
We will say that {\em the error pattern $e=(e_X,e_Z)$ is correctly decoded} if on input $\sigma(e)$ the decoder outputs
$(e_X+f_X,e_Z+f_Z)$, with $f_X$ and $f_Z$ in the row space of $H_Z$ and the row space of $H_X$ respectively.
Our purpose is to exhibit a decoder that will always correctly decode every error $e$ of weight smaller than the quantity given in Theorem~\ref{thm:main}. The {\em weight} $w(e)$ of the error $e=(e_X,e_Z)$
is the number of coordinates~$i$ in which at least one of the two vectors $e_X,e_Z$ is non-zero. 

Before describing the decoder,
let us fix some additional notation.
The rows of the matrix $H_Z$ are called {\em $Z$-generators} 
(for the purpose of this paper's internal notational coherence: note that
this terminology is not standard in the stabilizer formalism).
Recall that a row of $H_Z$ is indexed by an element $ba$ of $B\times A$.
We choose to identify the corresponding generator with its support in
$A^2\cup B^2$, and denote it by $g_{ba}$, so that we have:
\begin{align}
g_{ba} := \left\{ \alpha a \: : \: \alpha \sim b\right\} \cup \left\{ b \beta \: : \: \beta \sim a\right\}. 
\end{align}

Similarly the $X$-generators are denoted by $g_{\alpha \beta}$ and we write
\begin{align}
g_{\alpha \beta} := \left\{ a \beta \: : \: a \sim \beta \right\} \cup \left\{ \alpha b \: : \: b \sim \alpha\right\}.
\end{align}

The decoding algorithm treats $X$ and $Z$ errors independently.
This means that the decoder applies two separate decoding
procedures, one
consisting of outputting $e_X+f_X$ from $\sigma_X(e_X)$, and the other
outputting $e_Z+f_Z$ from $\sigma_Z(e_Z)$.

We remark that it is enough to prove Theorem~\ref{thm:main}
for error patterns of the form $(e_X,0)$ and of the form $(0,e_Z)$.
We will therefore describe the decoding algorithm for errors
of type $(e_X,0)$, the other case being symmetric. Note that
the weight of the error pattern $w(e_X,0)$ is simply the Hamming
weight of the vector $e_X$, and we denote it by $|e_X|$.

The decoding algorithm for $e_X$
works by going through the generators $g_{ba}$ and for each one, by
looking at whether flipping any pattern of bits strictly decreases the weight of the syndrome. 
More precisely: the algorithm takes as input a syndrome vector
$s\in\f_2^{A\times B}$. It then looks for a vector $e_1\in\f_2^{A^2\cup B^2}$ such that:
\begin{itemize}
\item the support of $e_1$ is included in some generator $g_{ba}$,
\item $s_1=s+\sigma_X(e_1)$, with $|s_1|<|s|$,
\item $\frac{|s|-|s_1|}{|e_1|}$ is maximal, subject to the above two conditions. 
\end{itemize} 
If the algorithm cannot find a vector $e_1$ satisfying the first two conditions, it outputs a decoding failure. 
After $i$ decoding iterations, the algorithm is in possession of
a syndrome vector $s_i\in\f_2^{A\times B}$, together with a vector $e_1+e_2+\cdots +e_i\in\f_2^{A^2\cup B^2}$,
such that
\begin{itemize}
\item the support of every $e_j$, $j=1\ldots i$, is included in some generator,
\item $s_i=s+\sigma_X(e_1+\cdots +e_i)$ and $|s_i|<|s|$.
\end{itemize}
The $(i+1)$-th decoding iteration consists in finding a vector 
$e_{i+1}\in\f_2^{A^2\cup B^2}$ such that
\begin{itemize}
\item the support of $e_{i+1}$ is included in some generator $g_{ba}$,
\item $s_{i+1}=s_i+\sigma_X(e_{i+1})$, with $|s_{i+1}|<|s_i|$,
\item $\frac{|s_i|-|s_{i+1}|}{|e_{i+1}|}$ is maximal, subject to the above two conditions. 
\end{itemize}

The algorithm proceeds until it reaches some iteration $i$
after which it cannot find any generator $g_{ba}$ which enables it to
decrease the weight of $s_i$. If $|s_i|\neq 0$, then it outputs a decoding failure. Otherwise we have $s_i=0$, and the algorithm outputs
the vector $e_O := e_1+e_2+\cdots +e_i$.

We shall prove that if $s=\sigma_X(e_X)$ with $|e_X|$ sufficiently small,
then the decoding algorithm never outputs a failure and its output
$e_O$ satisfies $e_O+e_X\in\C_Z^\perp$, equivalently $e_O+e_X$ is a sum of $Z$-generators, meaning the algorithm has correctly decoded the error pattern $e_X$.

By carefully updating the list of generators to be re-examined after every iteration, we obtain in a classical way an algorithm that runs in time linear in the number $n$ of qubits, in the uniform cost model.

\section{Analysis of the decoding algorithm}\label{sec:analysis}
We first recall some tools and results from \cite{SS96}.

Consider the graph $G=(A\cup B,\E)$ and a subset of vertices $S$. 
Let us define the set $\Gamma_u(S)$ of {\em unique neighbors} of $S$, that is, the set of vertices $v$ such that $v$ has degree 1 in the graph $S\cup\Gamma(S)$ induced by $S$. 
The complement of $\Gamma_u(S)$ in $\Gamma(S)$ will be called the set of {\em multiple neighbors} 
of $S$ and denoted $\Gamma_m(S)$.
Provided that the expansion in the graph $G$ is large enough,  then the graph displays unique-neighbor expansion, in the following sense:

\begin{lemma} \cite{SS96}.
\label{unique-exp}
Let $G = (A \cup B,\E)$ be a $(\gamma_A,\delta_A)$-left-expanding graph
with $\delta_{A} < 1/2$. Then, for any subset
 $S_{A} \subseteq A$ with $|S_{A}|\leq \gamma_{A} n_{A}$, we have:
\begin{align}
|\Gamma_u(S_{A})| \geq (1-2\delta_{A}) \Delta_{A} |S_{A}|.
\end{align}
\end{lemma}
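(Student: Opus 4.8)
The plan is to run a standard edge-counting argument that converts the left-expansion hypothesis into a lower bound on unique neighbors. Fix a subset $S_A\subseteq A$ with $|S_A|\leq\gamma_A n_A$. Since $G$ is $\Delta_A$-left-regular, the total number of edges leaving $S_A$ is exactly $\Delta_A|S_A|$. Every such edge lands on a vertex of $\Gamma(S_A)$, and each vertex of $\Gamma(S_A)$ receives at least one edge; the vertices receiving exactly one are the unique neighbors $\Gamma_u(S_A)$, while those receiving at least two are the multiple neighbors $\Gamma_m(S_A)$. The first thing I would write down is therefore the edge-count identity
\begin{equation}
\Delta_A|S_A| \;=\; \sum_{v\in\Gamma(S_A)} \deg_{S_A}(v) \;\geq\; |\Gamma_u(S_A)| + 2\,|\Gamma_m(S_A)|,
\end{equation}
where $\deg_{S_A}(v)$ is the number of edges from $v$ into $S_A$.

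Next I would use the left-expansion hypothesis in the form $|\Gamma(S_A)|\geq(1-\delta_A)\Delta_A|S_A|$. Writing $|\Gamma(S_A)|=|\Gamma_u(S_A)|+|\Gamma_m(S_A)|$ and combining this with the edge-count inequality above gives a two-inequality system in the unknowns $|\Gamma_u(S_A)|$ and $|\Gamma_m(S_A)|$. Eliminating $|\Gamma_m(S_A)|$ is the only real step: from the edge count, $2|\Gamma_m(S_A)|\leq\Delta_A|S_A|-|\Gamma_u(S_A)|$, and from expansion, $|\Gamma_m(S_A)|\geq(1-\delta_A)\Delta_A|S_A|-|\Gamma_u(S_A)|$. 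Doubling the second and substituting into the first yields
\begin{equation}
2(1-\delta_A)\Delta_A|S_A| - 2|\Gamma_u(S_A)| \;\leq\; 2\,|\Gamma_m(S_A)| \;\leq\; \Delta_A|S_A| - |\Gamma_u(S_A)|,
\end{equation}
and rearranging isolates $|\Gamma_u(S_A)|\geq(1-2\delta_A)\Delta_A|S_A|$, which is exactly the claim.

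I do not anticipate a genuine obstacle here: the argument is a one-shot double-counting of edges against the expansion bound, and the hypothesis $\delta_A<1/2$ enters only to guarantee that the resulting coefficient $1-2\delta_A$ is positive (so that the bound is non-vacuous). The one point requiring a little care is the meaning of ``degree $1$ in the graph induced by $S_A\cup\Gamma(S_A)$'': a unique neighbor is a right vertex adjacent to exactly one vertex of $S_A$, and I would make sure the edge-count inequality correctly charges at least two edges to each multiple neighbor, which is immediate since multiple neighbors have induced degree at least $2$ by definition. By the left–right symmetry of the expansion definition, the analogous statement for right-expanding graphs holds as well, though only the left version is stated here.
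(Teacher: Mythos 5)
Your proof is correct and follows essentially the same route as the paper: a double count of the $\Delta_A|S_A|$ edges leaving $S_A$ (charging at least $2$ to each multiple neighbor), combined with the expansion bound $|\Gamma_u(S_A)|+|\Gamma_m(S_A)| \geq (1-\delta_A)\Delta_A|S_A|$, and elimination of $|\Gamma_m(S_A)|$. The only cosmetic difference is that the paper phrases the edge count as an equality via the average right degree $\tilde{\Delta}\geq 2$ on multiple neighbors, while you use the equivalent inequality form directly; your observation that $\delta_A<1/2$ serves only to make the bound non-vacuous is also accurate.
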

\begin{proof}
Consider the bipartite graph induced by $S_A$ and $\Gamma(S_A)$. 
The number of edges incident to $S_A$ should coincide with the number of edges
incident to $\Gamma(S_A)$,
that is $\Delta_A |S_A| = |\Gamma_u(S_A)| + \tilde{\Delta} |\Gamma_m(S_A)|$ where $\tilde{\Delta} \geq 2$ is the average right degree on $\Gamma_m(S_A)$.
This implies that $|\Gamma_m(S_A)| \leq \frac{1}{2}(\Delta_A |S_A| - |\Gamma_u(S_A)|$. 
Moreover, the expansion in $G$ requires that $|\Gamma_u(S_A)|  \geq (1-\delta_A)\Delta_A |S_A|- |\Gamma_m(S_A)|$. 
Combining both inequalities gives the unique-neighbor expansion.
\end{proof}

We also recall from \cite{SS96}:
\begin{prop}\label{prop:d}
  If $G$ is $(\gamma_A,\delta_A)$-left-expanding for $\delta_A<1/2$, then
  the minimum distance $d$ of the associated classical code $\C_G$ is at least $\gamma_An_A$.
\end{prop}

From Proposition~\ref{prop:d} we immediately obtain:
\begin{corol}\label{cor:dmin}
  If the graph $G=(A\cup B,\E)$ is $(\gamma_A,\delta_A,\gamma_B,\delta_B)$-left-right-expanding with $\delta_A,\delta_B<1/2$, then the minimum distance of the associated quantum code $\Q_G$ is at least $\min(\gamma_An_A,\gamma_Bn_B)$.
\end{corol}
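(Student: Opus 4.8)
The plan is to deduce Corollary~\ref{cor:dmin} directly from Proposition~\ref{prop:d} by bounding the minimum distance of the quantum code $\Q_G$ in terms of the classical quantities $d$ and $d^T$ already provided in the parameter formula~\eqref{eq:parameters}. Recall from that formula that the minimum distance of $\Q_G$ is $\min(d,d^T)$, where $d$ is the minimum number of columns of $H$ summing to zero (the ordinary minimum distance of $\C_G$) and $d^T$ is the minimum number of rows of $H$ summing to zero (the transpose minimum distance). So the entire task reduces to lower-bounding each of $d$ and $d^T$ separately and then taking the minimum.

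First I would handle $d$. Since $G$ is $(\gamma_A,\delta_A)$-left-expanding with $\delta_A<1/2$, Proposition~\ref{prop:d} applies verbatim and gives $d \geq \gamma_A n_A$. Next I would handle $d^T$ by a symmetry argument. The transpose minimum distance $d^T$ is the minimum weight of a nonzero vector in the kernel of $H^T$, i.e. the ordinary minimum distance of the classical code $\C_{G'}$ associated to the transposed matrix $H^T$. But $H^T$ is precisely the biadjacency matrix one obtains by swapping the roles of $A$ and $B$ in the Tanner graph $G$: the rows of $H^T$ are indexed by $A$ and its columns by $B$. Thus the Tanner graph for $H^T$ is $G$ with the left and right sides interchanged, and its left-expansion is exactly the right-expansion of the original $G$, namely $(\gamma_B,\delta_B)$-expansion. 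Applying Proposition~\ref{prop:d} to this swapped graph, and using the hypothesis $\delta_B<1/2$, yields $d^T \geq \gamma_B n_B$.

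Combining the two bounds, the minimum distance of $\Q_G$ is $\min(d,d^T)\geq\min(\gamma_A n_A,\gamma_B n_B)$, which is exactly the claimed inequality. I would close by remarking that this completes the proof.

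The only subtlety — and the single step that deserves care rather than presenting any real obstacle — is the identification of $d^T$ as a genuine minimum distance to which Proposition~\ref{prop:d} can be applied. One must confirm that Proposition~\ref{prop:d}, though phrased for left-expansion, is really a statement about the biadjacency matrix of the graph and therefore transfers unchanged under the left/right swap; this is immediate once one notes that ``left-expanding'' for the swapped graph means ``right-expanding'' for $G$. There is also the harmless convention that $d^T=\infty$ when no subset of rows sums to zero, in which case the bound $d^T\geq\gamma_B n_B$ holds trivially, so the minimum is then governed by $d$ alone and the conclusion is unaffected.
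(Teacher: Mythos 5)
Your proposal is correct and follows exactly the paper's own argument: apply Proposition~\ref{prop:d} to bound $d\geq\gamma_A n_A$, invert the roles of $A$ and $B$ (so that right-expansion of $G$ becomes left-expansion of the swapped Tanner graph) to get $d^T\geq\gamma_B n_B$, and conclude via the parameter formula~\eqref{eq:parameters}. The extra care you take in justifying the left/right swap and the $d^T=\infty$ convention is a sound elaboration of what the paper states in one line.
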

\begin{proof}
  Proposition~\ref{prop:d} implies that the minimum distance of the
  associated classical code is at least $\gamma_An_A$, and, by inverting the roles of $A$ and $B$, that the
  transpose minimum distance is at least $\gamma_Bn_B$. The result follows from \eqref{eq:parameters}. 
\end{proof}

The crucial part of the original decoding strategy of Sipser and Spielman
consists in showing that, for errors of sufficiently small weight,
there must exist (at least) one critical
variable vertex of $A$ in error that has many unique
neighbors. Flipping the bit associated with this vertex decreases the
syndrome value and this eventually leads to showing that one can always
decode correctly by flipping bits that decrease the syndrome weight, provided
the initial error is sufficiently small.
In the present setting, we need a corresponding notion of criticality
for generators. It will build upon the unique neighbor idea with a number
of twists.

\begin{defn}\label{def:critical}
  Let $E\subset A^2\cup B^2$ be a subset of vertices that can be thought of as an error pattern.
  Let us say that a generator $g_{ba} = \left\{ \alpha a \: : \: \alpha \sim b\right\} \cup \left\{ b \beta \: : \: \beta \sim a\right\}$ is {\em critical} (with respect to $E$)
 if it can be partitioned as follows:
\begin{align}\label{eq:gba}
g_{ba} = x_a \cup \bar{x}_a \cup \chi_a \cup x_b \cup \bar{x}_b \cup \chi_b
\end{align}
where
\begin{itemize}
\item $x_a \cup \bar{x}_a \cup \chi_a$ is a partition of $g_{ba}\cap A^2$ and
      $x_b \cup \bar{x}_b \cup \chi_b$ is a partition of $g_{ba}\cap B^2$,
\item $x_a\cup x_b\subset E$, $\bar{x}_a\cap E=\emptyset$ and $\bar{x}_b\cap E=\emptyset$,
\item every vertex in $\Gamma(x_a)\cap\Gamma(x_b)$ has exactly two neighbors in $E$,
\item every vertex in $\Gamma(\bar{x}_a)\cap\Gamma(\bar{x}_b)$ has no neighbor in $E$,
\item every vertex in $\Gamma(x_a)\cap\Gamma(\bar{x}_b)$ and every vertex in $\Gamma(\bar{x}_a)\cap \Gamma(x_b)$ has exactly one neighbor in $E$,
\item $x_a \cup x_b \ne \emptyset$, $|\chi_a| \leq 2 \delta_B \Delta_B, |\chi_b| \leq 2 \delta_A \Delta_A$. 
\end{itemize}
\end{defn}

Note that Definition~\ref{def:critical} implies in particular that 
the syndrome vector has value $0$ in the coordinates indexed by
$\Gamma(x_a) \cap \Gamma(x_b)$ and $\Gamma(\bar{x}_a)\cap\Gamma(\bar{x}_b)$, and
has value $1$ in the coordinates indexed by $\Gamma(x_a) \cap \Gamma(\bar{x}_b)$ and in $\Gamma(\bar{x}_a)\cap \Gamma(x_b)$. See Fig.~\ref{syndrome} for details.

Lemma~\ref{goodgen} below asserts that a critical generator with respect
to an error pattern always exists, whenever the
error pattern $E$ has sufficiently small weight (cardinality), and Lemma~\ref{errordecreases}
claims that its is always possible to modify the pattern of qubits inside a critical generator
in a way as to simultaneously decrease the syndrome weight and not increase the error weight. Recall that we are decoding an $X$-error $e_X$ that we simply denote $e$ from now on.

\begin{lemma}
\label{goodgen}
If the weight of the error $e$ satisfies $0 < |e| \leq \min (\gamma_A n_A, \gamma_B n_B)$, then there exists a critical generator $g_{ba}$ with respect to
the support $E$ of $e$.
\end{lemma}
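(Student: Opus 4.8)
\section*{Proof proposal for Lemma~\ref{goodgen}}

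The plan is to reduce the existence of a critical generator to a purely local statement about the check-neighbourhood of a single generator, and then to locate a good generator by combining an averaging argument over the error with the unique-neighbour expansion of $G$ provided by Lemma~\ref{unique-exp}. The first step is to record the geometry hidden in Definition~\ref{def:critical}. Every check $\alpha\beta$ adjacent to $g_{ba}$ lies in the rectangle $\Gamma(b)\times\Gamma(a)\subseteq A\times B$, where $\Gamma(b)\subseteq A$ has size $\Delta_B$ and $\Gamma(a)\subseteq B$ has size $\Delta_A$; moreover the vertices of $g_{ba}\cap A^2$ index the rows of this rectangle and those of $g_{ba}\cap B^2$ index its columns, with $\alpha a$ (resp. $b\beta$) adjacent to exactly the $\alpha$-row (resp. $\beta$-column). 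Under this identification the partition sought in \eqref{eq:gba} is simply a labelling of each row and column as \emph{in error} ($x_a,x_b$), \emph{not in error} ($\bar{x}_a,\bar{x}_b$), or \emph{exceptional} ($\chi_a,\chi_b$), and the three neighbour conditions collapse to a single demand: at every check lying in a non-exceptional row \emph{and} a non-exceptional column, the only neighbours in $E$ are the two belonging to $g_{ba}$ itself. Consequently a row $\alpha$ (resp. column $\beta$) is forced to be exceptional precisely when some check of the rectangle carries, beyond $g_{ba}$'s own contribution, an \emph{external} same-type error: an $\alpha a'\in E$ with $a'\sim\beta$ and $a'\neq a$ (resp. a $b'\beta\in E$ with $b'\sim\alpha$ and $b'\neq b$).

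With this dictionary, I would first produce the error that makes $x_a\cup x_b$ non-empty. Assume $E\cap A^2\neq\emptyset$, the case $E\cap A^2=\emptyset$ being handled identically after exchanging rows and columns. Fix a vertex $\alpha_0$ whose row error set $S=\{a:\alpha_0 a\in E\}\subseteq A$ is non-empty; since $|S|\le|E|\le\gamma_An_A$, Lemma~\ref{unique-exp} applies and bounds the number of multiple-neighbour edges of $S$ by $2\delta_A\Delta_A|S|$. Averaging this estimate over $S$ yields a column index $a$ with $\alpha_0 a\in E$ whose neighbourhood $\Gamma(a)$ contains at most $2\delta_A\Delta_A$ vertices shared with another error of row $\alpha_0$. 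These shared neighbours are exactly the columns of the rectangle carrying a row-$\alpha_0$ external $A^2$-error, so assigning them to $\chi_b$ already meets the bound $|\chi_b|\le 2\delta_A\Delta_A$, while $\alpha_0 a$ will sit in $x_a$, guaranteeing $x_a\cup x_b\neq\emptyset$.

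It then remains to choose $b\sim\alpha_0$ so that the rows of $\Gamma(b)$ other than $\alpha_0$ can be labelled cleanly. A row $\alpha$ fails to be placeable in $\bar{x}_a$ exactly when it carries an error adjacent to a non-exceptional column of $\Gamma(a)$, whether an external $A^2$-error or an external $B^2$-error. I would bound the total number of such \emph{dirty} rows globally by applying the multiple-neighbour estimate of Lemma~\ref{unique-exp} a second time, now on the right side: rowwise to $E\cap A^2$ and columnwise to $E\cap B^2$, the edges landing on shared neighbours in $A$ are controlled by $\delta_B$. A further averaging over the $\Delta_A$ neighbours $b\sim\alpha_0$ should then select a $b$ for which at most $2\delta_B\Delta_B$ of the $\Delta_B$ rows of the rectangle are dirty; these go into $\chi_a$, and all remaining rows and columns are labelled by their membership in $E$. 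Checking that the six resulting sets satisfy every clause of Definition~\ref{def:critical} completes the argument.

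The hard part is the coupling between the two error types at a shared check. Because the cleanness condition mixes contributions of $E\cap A^2$ and $E\cap B^2$, neither the left- nor the right-expansion of $G$ alone governs the exceptional sets, and the generator index $(b,a)$ is not free: forcing the rectangle to meet the chosen error $\alpha_0 a$ constrains $b$ to $\Gamma(\alpha_0)$, so the two selections of $a$ and of $b$ cannot be made independently. The delicate point is therefore to attribute each external error unambiguously to a single exceptional row or column, and to order the two expansion-plus-averaging steps so that the bounds $|\chi_a|\le 2\delta_B\Delta_B$ and $|\chi_b|\le 2\delta_A\Delta_A$ hold \emph{simultaneously} while the selected rectangle is still guaranteed to contain an error; the factor $2$ in both bounds is exactly what the unique-neighbour estimate of Lemma~\ref{unique-exp} supplies on each side.
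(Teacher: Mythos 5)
Your ``rectangle'' dictionary for Definition~\ref{def:critical} is accurate, and your first step is close in spirit to the paper's, but the construction has two gaps, the second of which is fatal. The repairable one: you apply Lemma~\ref{unique-exp} to the set $S=\{a : \alpha_0a\in E\}$ of errors in the single row $\alpha_0$, rather than to the projection $E_A^2$ of \emph{all} of $E\cap A^2$ onto the second coordinate as the paper does; consequently your exceptional columns $\chi_b$ only absorb collisions with row-$\alpha_0$ errors, and every external $A^2$-error sitting in another row must be charged to a dirty row, inflating the budget $|\chi_a|\leq 2\delta_B\Delta_B$ you still have to pay. The fatal one is the second step. By insisting that the rectangle contain the pre-selected error $\alpha_0a$, you force $b\in\Gamma(\alpha_0)$, and you then claim that averaging over these $\Delta_A$ choices of $b$ bounds the number of dirty rows by $2\delta_B\Delta_B$. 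No such bound follows from Lemma~\ref{unique-exp}: unique-neighbour expansion controls the number of multiple neighbours \emph{averaged over the elements of the expanding set itself} (here, the projection onto $B$ of the contaminating $B^2$-errors), not averaged over the unrelated neighbourhood $\Gamma(\alpha_0)$. When the contaminating set is large, every single $b\sim\alpha_0$ can have essentially all of its $\Delta_B$ rows dirty.

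Concretely, take $E\cap A^2=\{\alpha_0a_0\}$, fix a set $C$ of $\lfloor 2\delta_A\Delta_A\rfloor+1$ columns $\beta\sim a_0$, for each row $\alpha\in\Gamma(\Gamma(\alpha_0))\setminus\{\alpha_0\}$ pick $b'_\alpha\sim\alpha$ with $b'_\alpha\notin\Gamma(\alpha_0)$ (possible since $|\Gamma(\alpha)\cap\Gamma(\alpha_0)|\leq 2\delta_A\Delta_A<\Delta_A$ by pairwise expansion), and let $E\cap B^2$ consist of all $b'_\alpha\beta$ with $\beta\in C$. This error has weight bounded by a constant, so it satisfies the hypothesis of Lemma~\ref{goodgen} for $n$ large; yet for any $b\sim\alpha_0$, any admissible partition can make at most $2\delta_A\Delta_A$ columns exceptional, so some $\beta^*\in C$ is non-exceptional, and then for every row $\alpha\in\Gamma(b)\setminus\{\alpha_0\}$ the check $\alpha\beta^*$ is adjacent to the external error $b'_\alpha\beta^*$ with $b'_\alpha\neq b$, which is incompatible with all four allowed statuses of the pair $(\alpha a_0, b\beta^*)$; hence at least $\Delta_B-1>2\delta_B\Delta_B$ rows are forced into $\chi_a$ and no generator $g_{ba_0}$ with $b\sim\alpha_0$ is critical. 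The lemma is nevertheless true for this error, because the paper's proof abandons precisely the requirement you impose: after choosing $a$ by left expansion applied to $E_A^2$, it collects the $B^2$-errors lying on the unique-neighbour columns of $a$ into $E_{B,a}$; if this set is empty any $b$ adjacent to an error row works, and otherwise $b$ is chosen \emph{inside} the projection $E_{B,a}^1\subseteq B$ by a second application of Lemma~\ref{unique-exp}, so that the critical generator is anchored at a $B^2$-error ($x_b\neq\emptyset$) rather than at $\alpha_0a_0$ --- in the example above it is some $g_{b^*a_0}$ with $b^*\notin\Gamma(\alpha_0)$. This type-switching of the anchor error, which is exactly the coupling difficulty you flag at the end of your proposal but leave unresolved, is the missing idea; without it the two expansion-plus-averaging steps cannot be made compatible.
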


The proof of Lemma~\ref{goodgen} is given in Appendix~\ref{sec:goodgen}.

If $e$ is a vector of $\f_2^n$, let us call its {\em reduced weight}, 
denoted $w_R(e)$, the smallest Hamming weight of an element of the coset
$e+\C_Z^\perp$, i.e. the set of vectors of the form $e+f$, where $f$ is a sum
of $Z$-generators.

\begin{lemma}
\label{errordecreases}
If the reduced
weight of the error $e$ satisfies  $0<w_R(e)\leq\min (\gamma_A n_A, \gamma_B n_B)$, then there exists a critical generator 
together with a vector $e_1$ whose
support is included in the generator, such that $|\sigma_X(e)|-|\sigma_X(e+e_1)|\geq |e_1|/3$.
In words, flipping the $k$ bits of the support of $e_1$ decreases the syndrome weight by at least $k/3$.
Moreover,  $w_R(e+e_1)\leq w_R(e)$.
\end{lemma}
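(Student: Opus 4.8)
The plan is to combine the existence result of Lemma~\ref{goodgen} with an explicit choice of $e_1$ supported on the critical generator, and to account for the syndrome change block by block on the local grid attached to that generator. First I would pass to a minimum-weight representative. Since every $f\in\C_Z^\perp$ is a sum of $Z$-generators and the rows of $H_X$ are orthogonal to those of $H_Z$, we have $\sigma_X(e+f)=\sigma_X(e)$ for all such $f$; hence both the syndrome and the reduced weight $w_R$ are constant on the coset $e+\C_Z^\perp$. I therefore replace $e$ by a representative $e'$ with $|e'|=w_R(e)\le\min(\gamma_An_A,\gamma_Bn_B)$ and $\sigma_X(e')=\sigma_X(e)$. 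This weight bound is exactly the hypothesis of Lemma~\ref{goodgen}, which then produces a critical generator $g_{ba}$ with respect to $E'=\mathrm{supp}(e')$, together with the partition $g_{ba}=x_a\cup\bar x_a\cup\chi_a\cup x_b\cup\bar x_b\cup\chi_b$ of Definition~\ref{def:critical}.

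Next I take $e_1$ to be the indicator vector of $x_a\cup x_b$, i.e. I flip exactly the error bits of the generator. The crucial localisation is that every bit of $g_{ba}$ has all its $\G_X$-neighbours inside the grid $\Gamma_G(b)\times\Gamma_G(a)\subseteq A\times B$: flipping $\alpha a$ toggles the whole row $\alpha$ of this grid and flipping $b\beta$ toggles the whole column $\beta$. Thus $\sigma_X(e'+e_1)$ and $\sigma_X(e')$ differ only inside this $\Delta_B\times\Delta_A$ grid, which I split into the nine blocks indexed by $\{x_a,\bar x_a,\chi_a\}\times\{x_b,\bar x_b,\chi_b\}$. On the four clean product blocks the syndrome is pinned down by Definition~\ref{def:critical}: it equals $1$ exactly on $(x_a,\bar x_b)$ and $(\bar x_a,x_b)$ and $0$ on $(x_a,x_b)$ and $(\bar x_a,\bar x_b)$, and flipping $x_a\cup x_b$ sends all four to $0$, removing $|x_a|\,|\bar x_b|+|\bar x_a|\,|x_b|$ ones. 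The only blocks whose contribution I cannot sign are $(x_a,\chi_b)$ and $(\chi_a,x_b)$, which are toggled but carry no guarantee on their syndrome, so the loss there is at most $|x_a|\,|\chi_b|+|\chi_a|\,|x_b|$; every remaining block is untouched by the flip.

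Combining the two effects, the syndrome decreases by at least $|x_a|(|\bar x_b|-|\chi_b|)+|x_b|(|\bar x_a|-|\chi_a|)$. I would then lower-bound $|\bar x_a|,|\bar x_b|$ from the partition together with $|\chi_a|\le 2\delta_B\Delta_B$ and $|\chi_b|\le 2\delta_A\Delta_A$, and invoke $\delta_A,\delta_B<1/6$ (so that $1-4\delta>1/3$) to convert this into the claimed bound $|\sigma_X(e)|-|\sigma_X(e+e_1)|\ge |e_1|/3$. At this step I would also use the minimality of $e'$: flipping the whole generator $g_{ba}$ yields an element of the same coset and cannot decrease the Hamming weight, which gives $|g_{ba}\cap E'|\le|g_{ba}\setminus E'|$ and rules out the degenerate configurations (a generator lying almost entirely in $E'$) for which no flip can help. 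The reduced-weight assertion is then immediate: since $x_a\cup x_b\subseteq E'$, flipping it strictly lowers the Hamming weight, so $w_R(e+e_1)\le |e'+e_1|=|e'|-|e_1|<w_R(e)$.

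The main obstacle is precisely the balance between the clean gain $|x_a||\bar x_b|+|\bar x_a||x_b|$ and the uncontrolled loss $|x_a||\chi_b|+|\chi_a||x_b|$. When the error bits $x_a,x_b$ nearly fill the generator, the gain shrinks while the worst-case loss grows, and it is exactly here that the two hypotheses $\delta_A,\delta_B<1/6$ and the minimality of $e'$ have to be used in tandem rather than separately; extracting the constant $1/3$ in this regime — perhaps by flipping a carefully chosen subset of $x_a\cup x_b$ rather than all of it, or by exploiting the finer output of Lemma~\ref{goodgen} to replace the worst-case $\chi$-contribution by its true value — is the delicate core of the argument and the step I expect to require the most care.
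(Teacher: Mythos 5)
Your setup is sound as far as it goes: passing to a minimum-weight coset representative (legitimate, since $\sigma_X$ and $w_R$ are constant on $e+\C_Z^\perp$), invoking Lemma~\ref{goodgen}, and the block-by-block accounting on the grid $\Gamma(b)\times\Gamma(a)$ reproduce exactly the paper's bound \eqref{eq:partial}: flipping $x_a\cup x_b$ decreases the syndrome weight by at least $|x_a||\bar{x}_b|+|\bar{x}_a||x_b|-|x_a||\chi_b|-|\chi_a||x_b|$. The genuine gap is that you commit to $e_1$ supported on $x_a\cup x_b$, and this choice provably fails in the regime you flag at the end --- that regime is not a finishing touch but the core of the lemma. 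In the normalized variables $x=|x_a|/\Delta_B$, $y=|x_b|/\Delta_A$, $z=|\chi_a|/\Delta_B$, $t=|\chi_b|/\Delta_A$, your guaranteed decrease is $\Delta_A\Delta_B\left(x\bar{y}+\bar{x}y-xt-yz\right)$; taking for instance $z,t$ near $1/3$ and $x=y=0.35$, $\bar{x}=\bar{y}\approx 0.32$ (all compatible with $\delta_A,\delta_B<1/6$ and with the minimality constraint $x\Delta_B+y\Delta_A\le\frac{1}{2}(\Delta_A+\Delta_B)$), this quantity is negative: flipping the error bits can \emph{increase} the syndrome weight. Minimality of the representative does not exclude this configuration, and no subset of $x_a\cup x_b$ does better, so the repair you tentatively suggest (flipping a carefully chosen subset of $x_a\cup x_b$) cannot work.

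The missing idea is the one that makes the lemma quantum rather than classical: when $x>\bar{x}$ and $y>\bar{y}$, the paper flips the \emph{error-free} set $\bar{x}_a\cup\bar{x}_b$ instead. The corresponding bound \eqref{eq:partial2}, $\bar{\partial}\ge\Delta_A\Delta_B\left(x\bar{y}+\bar{x}y-\bar{x}t-\bar{y}z\right)$, has its loss terms weighted by the \emph{small} quantities $\bar{x},\bar{y}$, and the argument symmetric to your balanced case yields $\bar{\partial}\ge\frac{1}{2}|\bar{x}_a\cup\bar{x}_b|$. The Hamming weight of $e+e_1$ then goes up, but its reduced weight does not, because $e+e_1$ is equivalent modulo the generator $g_{ba}\in\C_Z^\perp$ to the vector obtained by flipping $x_a\cup x_b\cup\chi_a\cup\chi_b$, of weight at most $|e|-(|x_a|+|x_b|)+(|\chi_a|+|\chi_b|)<|e|$ since $x+y>2/3$ and $z,t<1/3$; this is precisely why the statement is phrased with $w_R$ rather than $|\cdot|$. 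Finally, there is a fourth case (neither ``$x\le\bar{x}$ and $y\le\bar{y}$'' nor ``$x>\bar{x}$ and $y>\bar{y}$'', with $x+y>2/3$), which the paper handles by summing the two bounds: using $2x\bar{y}+2\bar{x}y\ge(1-z)(1-t)$, the identity $(1-z)(1-t)-t(1-z)-z(1-t)=3\left(\frac{2}{3}-z\right)\left(\frac{2}{3}-t\right)-\frac{1}{3}$, and the integrality refinements $z\le\frac{1}{3}-\frac{1}{3\Delta_B}$, $t\le\frac{1}{3}-\frac{1}{3\Delta_A}$, it gets $\max\{\partial,\bar{\partial}\}\ge\frac{1}{3}\cdot\frac{\Delta_A+\Delta_B}{2}$, and minimality converts this into a decrease of at least $|e_1|/3$ for whichever of the three candidate supports is chosen. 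Your proposal contains neither of these two cases, so the proof as written is incomplete exactly where the difficulty lies.
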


The proof of Lemma~\ref{errordecreases} is given in Appendix~\ref{sec:errordecreases}.

The decoding algorithm is analyzed in two steps. First we show that it never outputs a decoding failure,
i.e. always decreases the syndrome weight to zero, and secondly we prove the
output error vector $e_O$ is equivalent to the original vector $e$ modulo
the space of $Z$-generators.

Lemma \ref{errordecreases} implies the 
\textit{robustness} of the quantum code. 
\begin{corol}[Robustness]
\label{robust} 
Any error $e$ with reduced weight $w_R(e) < \min (\gamma_A n_A, \gamma_B n_B)$ has a syndrome with weight bounded from below as $|\sigma(e)| \geq \frac{1}{3} w_R(e)$.
\end{corol}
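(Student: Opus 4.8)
The plan is to obtain the robustness bound by iterating Lemma~\ref{errordecreases} and collecting the per-step syndrome decrements into a single telescoping sum. If $w_R(e)=0$ the inequality is trivial, so I may assume $0<w_R(e)<\min(\gamma_An_A,\gamma_Bn_B)$. As throughout this section, I work with the $X$-component (recall that $e$ stands for $e_X$ here) and with $\sigma_X$; the statement for a general error pattern then follows by adding the contribution of the symmetric $Z$-analysis, since the two syndromes and the two reduced weights decouple.

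Setting $e^{(0)}=e$, I would construct a sequence $e^{(0)},e^{(1)},\ldots$ as follows. As long as $w_R(e^{(i)})>0$, the estimate $w_R(e^{(i)})\le w_R(e)<\min(\gamma_An_A,\gamma_Bn_B)$ places us inside the hypothesis of Lemma~\ref{errordecreases}, which yields a vector $e_1^{(i)}$ supported in a single critical generator with
\[
|\sigma_X(e^{(i)})|-|\sigma_X(e^{(i)}+e_1^{(i)})|\ \ge\ |e_1^{(i)}|/3,
\]
together with $w_R(e^{(i)}+e_1^{(i)})\le w_R(e^{(i)})$. I then set $e^{(i+1)}:=e^{(i)}+e_1^{(i)}$. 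The \emph{moreover} clause of the lemma is exactly what guarantees, by induction, that the reduced weight never leaves the admissible window, so that the lemma may be reapplied at every step. Since each step strictly lowers the integer quantity $|\sigma_X(\cdot)|$ (by at least $|e_1^{(i)}|/3\ge 1/3$, hence by at least one), only finitely many steps occur, and the loop halts at some index $N$, where its defining condition fails, i.e. $w_R(e^{(N)})=0$. Thus $e^{(N)}\in\C_Z^\perp$ and $\sigma_X(e^{(N)})=0$.

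It remains to telescope and to account for the accumulated flips. Summing the per-step inequalities gives
\[
|\sigma_X(e)|=|\sigma_X(e^{(0)})|-|\sigma_X(e^{(N)})|=\sum_{i=0}^{N-1}\bigl(|\sigma_X(e^{(i)})|-|\sigma_X(e^{(i+1)})|\bigr)\ \ge\ \tfrac13\sum_{i=0}^{N-1}|e_1^{(i)}|.
\]
The key accounting step is to bound $\sum_i|e_1^{(i)}|$ from below by $w_R(e)$. Since $e^{(N)}=e+\sum_i e_1^{(i)}$ with $e^{(N)}\in\C_Z^\perp$, the vector $\sum_i e_1^{(i)}$ lies in the coset $e+\C_Z^\perp$, so $|\sum_i e_1^{(i)}|\ge w_R(e)$ by the definition of reduced weight. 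Combining this with the triangle inequality $\sum_i|e_1^{(i)}|\ge|\sum_i e_1^{(i)}|$ produces $|\sigma_X(e)|\ge\frac13 w_R(e)$, which is the desired bound.

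Once Lemma~\ref{errordecreases} is available, this is essentially a bookkeeping argument, so I do not anticipate a genuine obstacle. The only point demanding care is the interplay between the two conclusions of that lemma: the syndrome decrement drives the telescoping estimate, while the monotonicity of $w_R$ both keeps the iteration inside the valid weight range and ensures that the cumulative flip $\sum_i e_1^{(i)}$ is a single representative of the coset $e+\C_Z^\perp$ whose weight is controlled by the total syndrome drop.
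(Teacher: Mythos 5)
Your proof is correct and follows essentially the same route as the paper: iterate Lemma~\ref{errordecreases}, telescope the per-step syndrome decrements, and lower-bound the weight of the accumulated flip vector by $w_R(e)$ via its membership in the coset $e+\C_Z^\perp$. The one structural difference lies in the loop condition. You iterate while $w_R(e^{(i)})>0$, so that upon termination the membership $e^{(N)}\in\C_Z^\perp$ (hence $\sigma_X(e^{(N)})=0$) is automatic from the exit condition; the paper instead iterates while the \emph{syndrome} weight is positive, and must then invoke Corollary~\ref{cor:dmin} to promote ``zero syndrome and reduced weight below $\min(\gamma_An_A,\gamma_Bn_B)$'' to ``lies in $\C_Z^\perp$''. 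Your variant trades that appeal to the minimum-distance corollary for the observation that the strict integer decrease of the syndrome weight forces the $w_R$-driven loop to halt; both hinges are supplied by the same two conclusions of Lemma~\ref{errordecreases} (the decrement bound and the monotonicity of $w_R$), so the difference is organizational rather than substantive, though it is a genuine small economy. One further point in your favor: you explicitly flag the reduction of the general statement (about $\sigma(e)$) to the $X$-component analysis, which the paper's proof leaves implicit; your decoupling claim does hold, since the pattern weight of any representative dominates the Hamming weight of each component, so both $w_R(e_X)$ and $w_R(e_Z)$ stay within the window where the lemma applies.
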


\begin{proof}
Without loss of generality, suppose $e$ is a representative of $e+\C_Z^\perp$
with minimum weight.
As long as the weight of the syndrome is positive, Lemma \ref{errordecreases} guarantees the existence of a generator and a vector $e_1$ whose support is included in the generator such that $|\sigma_X(e)|-|\sigma_X(e+e_1)|\geq |e_1|/3$.
Moreover, since $w_R(e+e_1)\leq |e|$, we can apply Lemma \ref{errordecreases}
again to $e+e_1$ and iterate, say $i$ times, until the syndrome weight reaches 0. 
After $i$ iterations, we obtain that the syndrome of $e+e_1+e_2+\cdots +e_i$
is 0, hence that 
\begin{equation}
  \label{eq:1/6}
  |\sigma_X(e)|\geq \frac 13(|e_1|+|e_2|+\cdots |e_i|) \geq \frac 13 |e_1+\cdots e_i|,
\end{equation}
and that $w_R(e+e_1+\cdots +e_i)\leq |e| < \min (\gamma_A n_A, \gamma_B n_B)$.
This last fact implies by Corollary~\ref{cor:dmin} that $e+e_1+\cdots +e_i$ is
in $C_Z^\perp$, i.e. that $e$ is equal to $e_1+\cdots +e_i$ modulo a sum of
$Z$-generators. 
Inequality \eqref{eq:1/6} proves therefore that $|\sigma_X(e)|\geq \frac 13 w_R(e)$.
\end{proof}

Unfortunately, the decoding algorithm is not guaranteed to follow the good decoding path exhibited by Lemma~\ref{errordecreases}. Indeed, the decoding algorithm simply tries to optimize the weight of the syndrome, but the error weight might increase in the process. Nevertheless, we have:
\begin{lemma}
\label{terminates}
If the Hamming weight of the initial error $e$ is less than $w_0$, then the decoding algorithm never outputs a decoding failure and always correctly decodes $e$.
\end{lemma}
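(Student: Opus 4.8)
The plan is to combine the two structural lemmas already stated --- Lemma~\ref{goodgen} (a critical generator always exists for small errors) and Lemma~\ref{errordecreases} (a critical generator yields a syndrome-decreasing move with a good ratio) --- together with the minimum-distance bound of Corollary~\ref{cor:dmin}, to control the greedy algorithm. The difficulty is that the algorithm does not follow the ``good path'' of Lemma~\ref{errordecreases}: it greedily maximizes the ratio $(|s_i|-|s_{i+1}|)/|e_{i+1}|$, which decreases the syndrome efficiently but may \emph{increase} the actual error weight $|e|$. So I cannot simply argue that the error weight stays small by construction; I must instead track a quantity that the algorithm provably keeps under control.

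First I would establish that the algorithm never gets stuck while the syndrome is nonzero, provided the running error stays within the regime where Lemma~\ref{goodgen} applies. The key invariant to maintain is a bound on the reduced weight $w_R(e+e_1+\cdots+e_i)$ of the current corrected error. The point is that Lemma~\ref{errordecreases} guarantees, whenever $0<w_R\leq\min(\gamma_An_A,\gamma_Bn_B)$, a move with ratio at least $1/3$; and since the greedy algorithm picks the \emph{maximal} available ratio, the move it actually makes also has ratio at least $1/3$. Summing these ratios as in the proof of Corollary~\ref{robust}, I get that the total Hamming weight of the applied corrections $e_1+\cdots+e_i$ is at most $3|\sigma_X(e)|$. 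Since $|\sigma_X(e)|\leq(\Delta_A+\Delta_B)|e|$ (each error bit contributes at most $\Delta_A+\Delta_B$ to the syndrome weight, this being the row weight of $H_X$), and the initial error weight is at most $w_0$, the accumulated correction weight is bounded by roughly $3(\Delta_A+\Delta_B)w_0$. Combined with the initial $|e|<w_0$, this controls $w_R$ of the running corrected error and keeps it below the threshold $\min(\gamma_An_A,\gamma_Bn_B)$ --- here is exactly where the factor $1/(3(1+\Delta_B))$ in the definition~\eqref{eq:n0} of $w_0$ is tuned, so I would reverse-engineer the precise constant from this inequality.

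With the invariant in hand, the termination argument runs as follows. As long as $|s_i|\neq 0$, the current corrected error has positive reduced weight (its syndrome is $s_i\neq 0$, so it cannot lie in $\C_Z^\perp$), and by the invariant its reduced weight is at most $\min(\gamma_An_A,\gamma_Bn_B)$; hence Lemma~\ref{errordecreases} supplies a valid syndrome-decreasing move, so the greedy step cannot fail. Since $|s_i|$ is a nonnegative integer strictly decreasing at every iteration, the algorithm terminates, and it can only terminate with $s_i=0$ --- so it never outputs a decoding failure. Finally, correctness: at termination $\sigma_X(e+e_O)=0$ means $e+e_O\in\C_X$, and the invariant gives $w_R(e+e_O)\leq\min(\gamma_An_A,\gamma_Bn_B)$, so $e+e_O$ is a codeword of weight below the minimum distance bound of Corollary~\ref{cor:dmin}; therefore $e+e_O\in\C_Z^\perp$, i.e. $e_O$ equals $e$ modulo $Z$-generators, which is exactly correct decoding.

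The main obstacle I anticipate is the bookkeeping that keeps the running reduced weight below threshold throughout \emph{all} iterations simultaneously, not just at the start and the end. One must be careful that the telescoping bound $\sum|e_j|\leq 3|\sigma_X(e)|$ is applied to the partial sums at \emph{every} intermediate step --- the ratio-$\ge 1/3$ guarantee must hold at each iteration, which requires that $w_R$ of the intermediate error never exceeds the threshold, precisely the statement being proved. This is a mild circularity that I would resolve by an induction on the iteration count: assuming the threshold bound holds up to step $i$, the partial-sum estimate bounds $w_R$ at step $i+1$, closing the loop, with the constant in $w_0$ chosen exactly so the induction goes through. A secondary subtlety is verifying $|\sigma_X(e)|\le(\Delta_A+\Delta_B)|e|$ and confirming it is this bound (rather than a sharper one) that the definition of $w_0$ is calibrated against.
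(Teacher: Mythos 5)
Your proposal follows the same route as the paper's own proof: an induction over iterations maintaining that the running error $\varepsilon_i = e+e_1+\cdots+e_i$ stays below the threshold $\min(\gamma_A n_A,\gamma_B n_B)$, using the fact that the greedy step inherits the ratio-$\geq 1/3$ guarantee from Lemma~\ref{errordecreases} (since the algorithm maximizes the ratio), telescoping these inequalities to get $|e_1|+\cdots+|e_{i+1}|\leq 3|\sigma_X(e)|$, and concluding with Corollary~\ref{cor:dmin} that the zero-syndrome word $e+e_O$ lies in $\C_Z^\perp$. (The paper phrases the invariant in terms of the Hamming weight $|\varepsilon_i|$ rather than $w_R(\varepsilon_i)$, but since $w_R\leq|\cdot|$ and Lemma~\ref{errordecreases} only needs the reduced weight, this is the same argument; your "mild circularity" is resolved exactly as in the paper, by induction on the iteration count.)

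The one genuine flaw is the quantitative step you flagged yourself: the per-bit contribution to the syndrome weight is governed by the \emph{column} weight of $H_X$, not its row weight. A qubit $\alpha a\in A^2$ lies in the $\Delta_A$ generators $g_{\alpha\beta}$ with $\beta\sim a$, and a qubit $b\beta\in B^2$ lies in the $\Delta_B$ generators $g_{\alpha\beta}$ with $\alpha\sim b$; hence $|\sigma_X(e)|\leq \Delta_B|e|$, using $\Delta_A\leq\Delta_B$. This sharper bound is what the constant in \eqref{eq:n0} is calibrated against: it yields $|\varepsilon_{i+1}|\leq |e|+3|\sigma_X(e)|\leq (1+3\Delta_B)|e| < 3(1+\Delta_B)\,w_0 = \min(\gamma_A n_A,\gamma_B n_B)$, which closes the induction. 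With your bound $(\Delta_A+\Delta_B)|e|$, closing the induction would instead require $\bigl(1+3(\Delta_A+\Delta_B)\bigr)|e| < \min(\gamma_A n_A,\gamma_B n_B)$, and since $1+3\Delta_A+3\Delta_B > 3(1+\Delta_B)$ for every $\Delta_A\geq 1$, this fails exactly for errors of weight near $w_0$: you would only prove the lemma for the smaller threshold $\frac{1}{1+3(\Delta_A+\Delta_B)}\min(\gamma_A n_A,\gamma_B n_B)$. So the architecture of your proof is right, but establishing the statement with the stated $w_0$ requires replacing the row-weight estimate by the column-weight estimate $\Delta_B$.
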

\begin{proof}
The decoding algorithm chooses a sequence of vectors
 $e_1,e_2,\ldots,$ such that $$|\sigma_X(e)|,|\sigma_X(e+e_1)|,\ldots,|\sigma_X(e+e_1+\cdots +e_i)|,\ldots$$ is a decreasing sequence. Set $\varepsilon_0=e,\varepsilon_1=e+e_1,\ldots ,\varepsilon_i=\varepsilon_{i-1}+e_i$.
Now whenever 
\begin{equation}
  \label{eq:epsilon}
  |\varepsilon_i|< \min(\gamma_An_A,\gamma_Bn_B),
\end{equation}
Lemma~\ref{errordecreases} ensures that
\begin{equation}
\frac{|\sigma_X(\varepsilon_i)|-|\sigma_X(\varepsilon_{i+1})|}{|e_{i+1}|} \geq \frac{1}{3}.
\end{equation}
By hypothesis \eqref{eq:epsilon} holds for $i=0$. Suppose it holds for $0,1,\ldots ,i$, we then have
\begin{eqnarray*}
|e_1| & \leq & 3 (|\sigma_X(\varepsilon_0)|-|\sigma_X(\varepsilon_1)|)\\
\dots &\leq & \dots\\
|e_{i+1}| & \leq & 3(|\sigma_X(\varepsilon_i)|-|\sigma_X(\varepsilon_{i+1})|)
\end{eqnarray*}
By summing all these inequalities we obtain 
$ |e_1|+ \dots + |e_{i+1}| \leq 3\left(|\sigma_X(\varepsilon_0)|-|\sigma_X(\varepsilon_{i+1})|\right)
\leq 3|\sigma_X(e)|$. Writing $|\varepsilon_{i+1}|\leq |e| + |e_1|+ \dots + |e_{i+1}|$ we get $|\varepsilon_{i+1}|\leq |e| + 3|\sigma_X(e)|\leq |e| + 3\Delta_B|e|$ since the syndrome $\sigma_X(e)$ has weight at most $\Delta_B|e|$.
From the hypothesis on the Hamming weight of $e$ we get
that \eqref{eq:epsilon} holds for $i+1$, and inductively until the eventual output
of the algorithm $e_O=e_1+\cdots +e_j$, such that $\varepsilon_j = e+e_O$ has
zero syndrome. Condition \eqref{eq:epsilon} translates to
$|e+e_O|<\min(\gamma_An_A,\gamma_Bn_B)$ which ensures 
by Corollary~\ref{cor:dmin} that
$e+e_O\in\C_Z^\perp$, which means exactly that $e_O$ is a valid
representative of the initial error~$e$.
\end{proof}

\section{Concluding comments and questions}\label{sec:conclusion}

We have exhibited a linear-time decoding algorithm that corrects up
to $\Omega(n^{1/2})$ {\em adversarial} quantum errors over $n$ 
qubits. While this is the largest such asymptotic quantity to date,
one would hope to break this barrier and eventually achieve correction
of $\Omega(n)$ errors. If one were to do this with quantum LDPC codes,
this would imply obtaining the elusive proof of existence of low-density codes with a minimum distance scaling linearly in the number of qubits.

Kovalev and Pryadko have shown \cite{KP13} that the codes of \cite{TZ14}
have the potential to correct  number of {\em random} 
depolarizing errors that scales linearly in $n$, with a vanishing probability of decoding error. This is without decoding complexity 
limitations however, and a natural question is whether the ideas of the present
paper can extend to decoding $\Omega(n)$ random errors in linear
or quasi-linear time.

We have worked to achieve the smallest possible value of
$\delta_A,\delta_B$ in Theorem~\ref{thm:main}, i.e. the smallest possible
expansion coefficient for the base graph $G$. Can the bound
$\delta_A,\delta_B < 1/6$ be decreased further~? A somewhat related
question is whether the left-right expanding base graph $G$ can be
obtained constructively, rather than by random choice~?

%\bibliography{biblio}
%\bibliographystyle{plain}

\pagebreak
\appendix

\begin{center}
  {\LARGE \bf Appendices}
\end{center}

\begin{figure}[h]
\centering
 \includegraphics[width=.6\linewidth]{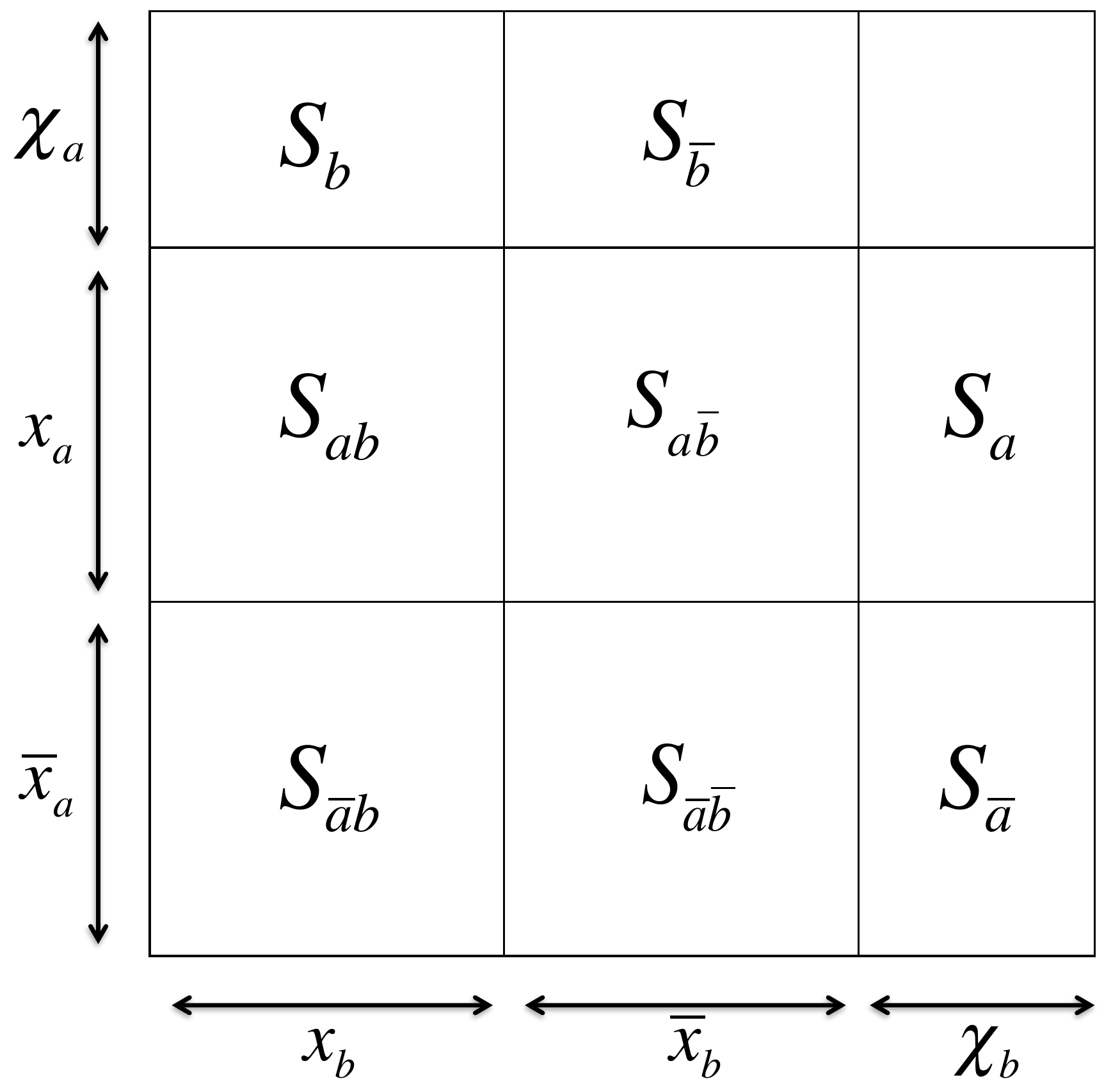}
\caption{A critical generator $g_{ba}$. The generator $g_{ba}$ is partitioned into 6 sets: $x_a$ and $x_b$ contain the errors we want to correct, $\bar{x}_a$ and $\bar{y}_b$ do not contain any error, and $\chi_a$ and $\chi_b$ have a small relative size. When flipping the qubits corresponding to $x_a$ and $x_b$ (or the qubits in $\bar{x}_a$ and $\bar{x}_b$), the syndromes in $S_{a \bar{b}}$ and $S_{\bar{a}b}$ see their weight decrease from 1 to 0 while the syndromes in the sets $S_{ab}$ and $S_{\bar{a}\bar{b}}$ remain unchanged; finally either the syndrome weights in $S_a \cup S_b$, or in $S_{\bar{a}} \cup S_{\bar{b}}$ can increase by 1 while the others stay unchanged. 
}
  \label{syndrome}
\end{figure}

\section{Proof of Lemma \ref{goodgen}}\label{sec:goodgen}

First, we need the following extra notation. Let $S$ be a subset of $A$
in the graph $G=(A\cup B,\E)$:
For a vertex $a\in S$,
we denote as follows, respectively, the set of unique neighbors of $a$ in the subgraph of $G$ induced by
$S\cup\Gamma(S)$, and the set of multiple neighbors of $a$ in the same induced graph:
\begin{align}\label{eq:um}
\Gamma_u^{S}(a) := \Gamma(a) \cap \Gamma_u(S), \qquad \Gamma_m^{S}(a)  := \Gamma(a) \cap\Gamma_m(S). 
\end{align}

Denote now by $E \ne \emptyset$ the support of the $X$ error $e$ to be corrected and define $E_A := E \cap A^2$ (resp.~$E_B := E\cap B^2$) the set of error vertices in $A^2$ (resp.~in $B^2$). 

Suppose first $E_B = \emptyset$. 
Consider the projection $E_A^2$ of $E_A$ on the second coordinate:
\begin{align}
E_A^2 := \left\{ a \in A \:  : \: A \times \{a\} \cap E_A \ne \emptyset \right\}.
\end{align}
Clearly $|E_A^2| \leq |E_A| \leq w(e) \leq \gamma_A n_A$. 
Lemma \ref{unique-exp} then implies that $|\Gamma_u(E_A^2)| \geq (1-2\delta_A) \Delta_A |E_A^2|$.
Therefore, using the notation \eqref{eq:um},
there exists $a \in E_A^2$ such that $|\Gamma_u^{E_A^2}(a)| \geq (1-2\delta_A) \Delta_A$.
Pick some $\alpha a \in E_A$ and $b \sim \alpha$. 
Then the generator $g_{ba}$ can be partitioned into four components $g_{ba} = x_a \cup \bar{x}_a \cup \bar{x}_b \cup \chi_b$ which satisfy the requirements of the lemma. Specifically,
we define $x_a=g_{ba}\cap E_A$, $\bar{x}_a$ as the complement of $x_a$ in $g_{ba}\cap A^2$,
$$\chi_{b}=\{b\beta\in g_{ba}\cap B^2 \: : \: \beta\in\Gamma_m^{E_A^2}(a)\}$$
and $\bar{x}_b$ as the complement of $\chi_b$ in $g_{ba}\cap B^2$. In words, 
$x_a$ and $\bar{x}_a$ partition the $A^2$ part of the generator $g_{ba}$ into error and error-free components, and $\bar{x}_b$ and $\chi_b$ partition the $B^2$ part into,
respectively, the set
whose second coordinate is a unique neighbor and the set whose second coordinate is a multiple
neighbor with respect to $E_A^2$.

The proof is identical with the roles of $A$ and $B$ interchanged if $E_A = \emptyset$. 

Let us turn our attention to the remaining case where both $E_A\neq\emptyset$ and $E_B \ne \emptyset$. 
We define $E_A^2$ as above and choose a coordinate $a \in E_A^2$ such that $\left| \Gamma_u^{E_A^2}(a) \right| \geq (1-2\delta_A) \Delta_A$. From now on, we use the shorthand $\Gamma_u(a)$ for this set, leaving $E_A^2$ implicit. 
Next, we define:
$$E_{B,a} := \{b\beta \in E_B, \, \beta \in\Gamma_u(a)\}.$$
If $E_{B,a} = \emptyset$, we proceed as in the case where $E_B = \emptyset$. 
Otherwise, by assumption on the weight of the error, namely $|e| \leq \gamma_B n_B$, Lemma \ref{unique-exp} applies again, this time to the set $E_{B,a}^1$ corresponding to the projection of $E_{B,a}$ on its first coordinate,
\begin{align}
E_{B,a}^1 := \left\{b \: : \: \exists\beta\in\Gamma_u(a), b\beta\in E_B \right\}.
\end{align}
This implies that there exists $b \in E_{B,a}^1$ such that $\left|\Gamma_u^{ E_{B,a}^1}(b) \right| \geq (1-2\delta_B)\Delta_B$. 
Using the shorthand $\Gamma_u(b) := \Gamma_u^{ E_{B,a}^1}(b)$,
we define:
\begin{align}
x_a & :=\{ \alpha a \in E_A \: : \: \alpha \in \Gamma_u(b)\}\\ 
\bar{x}_a & := \{ \alpha a \not\in E_A \: : \: \alpha \in  \Gamma_u(b)\}\\
x_b & :=\{ b \beta \in E_B \: : \: \beta \in \Gamma_u(a)\}\\ 
\bar{x}_b & := \{ b \beta \not\in E_B \: : \: \beta \in  \Gamma_u(a)\}.
\end{align}
The construction ensures that every element $\alpha\beta$ that belongs both to
the neighborhood of $x_a\cup \bar{x}_a$ and of $x_b\cup \bar{x}_b$ is such that
$\alpha$ is a unique neighbor of $b$ {\em and} $\beta$ is simultaneously a unique neighbor of 
$a$, unique neighborhood being understood with respect to $E$.
Moreover $x_b \ne \emptyset$ by construction of the set $E_{B,a}^1 \ne \emptyset$.

\section{Proof of Lemma \ref{errordecreases}}\label{sec:errordecreases}

We consider the generator $g_{ba}$ promised by Lemma \ref{goodgen}.
Without loss of generality, we may suppose that the error vector $e$ is
in reduced form, i.e. it is the vector of lowest Hamming weight in the coset
$e+\C_Z^\perp$. The consequence is that the number of qubit vertices in error
$|E\cap g_{ba}|$ within the generator $g_{ba}$ is not more than $(\Delta_A+\Delta_B)/2$,
otherwise replacing $e$ by $e+e'$, with $e'$ the vector having $g_{ba}$ for support, would yield a vector with strictly smaller Hamming weight within
the coset $e+\C_Z^\perp$. In particular we have:
\begin{equation}
  \label{eq:reduced}
  |x_a\cup x_b|\leq \frac 12(\Delta_A+\Delta_B).
\end{equation}

Let us introduce the following reduced variables:
\begin{align}
x := \frac{|x_a|}{\Delta_B}, \qquad z := \frac{|\chi_a|}{\Delta_B}, \qquad \bar{x} := 1-x-z,\\
y := \frac{|x_b|}{\Delta_A}, \qquad t := \frac{|\chi_b|}{\Delta_A}, \qquad \bar{y} := 1-y-t.
\end{align}
By assumption, $x + y >0$.

The expansion condition $\delta_A, \delta_B < \frac{1}{6}$ implies that $z,t <1/3$ by the last condition of Definition~\ref{def:critical}. 
However, $z \Delta_B$ and $x\Delta_A$ are integers, 
this implies $3z \Delta_B \leq 3 \Delta_B-1$ and 
$3t \Delta_A \leq 3 \Delta_A-1$, which in turn give:
\begin{align}
\label{bound-z-t}
0 \leq z \leq \frac{1}{3} - \frac{1}{3 \Delta_B}, \quad 0 \leq t \leq \frac{1}{3} - \frac{1}{3 \Delta_A}.
\end{align}

We are looking for a vector $e_1$, whose support is included in $g_{ba}$,
such that
\begin{itemize}
\item[i.] the syndrome of $e+e_1$ has strictly smaller weight than the syndrome of $e$, and the difference $|\sigma_X(e)|-|\sigma_X(e+e_1)|$
is at least $|e_1|/3$,
\item[ii.] the reduced weight $w_R(e+e_1)$ is at most equal to the Hamming weight $|e|$ of $e$.
\end{itemize}

We will consider four cases:
\begin{enumerate}
\item if $x+y\leq 2/3$, then the vector $e_1$ is chosen to have support
$x_a \cup x_b$,
\item if $x \leq \bar{x}$ and $y \leq \bar{y}$, then the vector
$e_1$ is chosen to have support $x_a \cup x_b$,
\item if the above two hypotheses do not hold and
if $x > \bar{x}$ and $y > \bar{y}$, then the vector
$e_1$ is chosen to have support either $\bar{x}_a \cup \bar{x}_b$,
or its complement in $g_{ba}$,
\item in the remaining cases, we show that either
 $x_a \cup x_b$ or  $\bar{x}_a \cup \bar{x}_b$, or
 $x_a \cup x_b\cup\chi_a\cup\chi_b$ is an adequate choice for
 the support of $e_1$.
\end{enumerate}

Let us introduce the partition of $\Gamma(g_{ba})$ induced by the partition~\eqref{eq:gba}:
$$\Gamma(g_{ba})=S_a\cup S_b\cup S_{\bar{a}} \cup S_{\bar{b}} \cup S_{ab}\cup
S_{a\bar{b}}\cup S_{\bar{a}b} \cup S_{\bar{a}\bar{b}}\cup \bar{S}$$
with 
$$
\begin{array}{llllllll}
  S_a\!\! &= \Gamma(x_a)\cap\Gamma(\chi_b)& S_b\!\! &= \Gamma(x_b)\cap\Gamma(\chi_a)&
  S_{\bar{a}}\!\! &= \Gamma(\bar{x}_a)\cap\Gamma(\chi_b)&
  S_{\bar{b}}\!\! &= \Gamma(\bar{x}_b)\cap\Gamma(\chi_a)\\
  S_{ab}\!\! &= \Gamma(x_a)\cap\Gamma(x_b)& S_{a\bar{b}}\!\!&=\Gamma(x_a)\cap\Gamma(\bar{x}_b)&
S_{\bar{a}b}\!\!&=\Gamma(\bar{x}_a)\cap\Gamma(x_b)&
S_{\bar{a}\bar{b}}\!\!&=\Gamma(\bar{x}_a)\cap\Gamma(\bar{x}_b)\\
\bar{S}\!\!&=\Gamma(\chi_a)\cap\Gamma(\chi_b)&&&&&&
\end{array}
$$
as represented on Figure~\ref{syndrome}. 

Let us denote by $\partial$ the decrease of the syndrome weight 
when we flip $x_a \cup x_b$ (i.e. choose $e_1$ to have support $x_a \cup x_b$). Similarly we denote by $\bar{\partial}$ the decrease
of the syndrome weight when we flip either $\bar{x}_a \cup \bar{x}_b$
or $x_a \cup x_b\cup\chi_a\cup\chi_b$.

These quantities satisfy:
\begin{align}
\partial/(\Delta_A \Delta_B) & \geq x \bar{y} + \bar{x} y - x t - yz\label{eq:partial}\\
\bar{\partial}/(\Delta_A \Delta_B) & \geq x \bar{y} + \bar{x} y - \bar{x} t - \bar{y}z.\label{eq:partial2}
\end{align}
This is because when we flip the bit values on the set $x_a\cup x_b$, then the value
of the syndrome is changed from 1 to 0 on the support $S_{a \bar{b}}$ and $S_{\bar{a}b}$, remains at 0 on the supports $S_{ab}$ and $S_{\bar{a}\bar{b}}$,
may possibly change from 0 to 1 on the supports $S_a \cup S_b$, and
remains unchanged in the other regions. Similarly, when we
flip the bit values on the set $\bar{x}_a \cup \bar{x}_b$, the syndrome
is flipped from 1 to 0 on the support $S_{a \bar{b}}$ and $S_{\bar{a}b}$
and possibly from 0 to 1 on $S_{\bar{a}} \cup S_{\bar{b}}$ while the other regions
remain unchanged.

We now address the four cases:
\begin{enumerate}
\item Suppose $x+y\leq 2/3.$ Then \eqref{eq:partial} can be rewritten as:
\begin{align*}
\frac{\partial}{\Delta_A \Delta_B} &
\geq x \bar{y} + \bar{x}y-xt-yz \\
& \geq x(1-y-t)+y(1-x-z)-xt-yz\\
& \geq x+y - 2 xy - 2(tx + yz)\\
& \geq \left\{ \frac{1}{3}(x+y)-2xy \right\} +x \left(\frac{2}{3}-2t \right)+y \left(\frac{2}{3}-2z \right)\\
& \geq \left\{ \frac{1}{3}(x+y)-2xy \right\} + \frac{2}{3} \left\{ \frac{x}{\Delta_A} + \frac{y}{\Delta_B} \right\}
\end{align*}
by applying Eq.\eqref{bound-z-t} to $z$ and $t$. The first term is
nonnegative for $x+y \leq 2/3$, so that we obtain:
$$
\partial \geq \frac{2}{3} |x_a \cup x_b| > \frac 13 |x_a \cup x_b|.
$$
Furthermore, when $e_1$ has support $x_a\cup x_b$, the support
of $e_1$ is included in the support of $e$, hence $|e+e_1|<|e|$ and
$w_R(e+e_1)<|e|$.
\item If $x \leq \bar{x}$ and $y \leq \bar{y}$, then 
$y=1-t-\bar{y}\leq \bar{y}$ gives $\bar{y}\geq \frac 12 - \frac t2$,
hence, applying \eqref{bound-z-t},
$$\bar{y}-t \geq \frac{1}{2\Delta_A}.$$
Similarly we have
$\bar{x}-z\geq \frac{1}{2\Delta_B},$
and \eqref{eq:partial} gives
\begin{align*}
\partial & \geq \Delta_A \Delta_B \left[ x (\bar{y}-t) + y(\bar{x} -z)\right]\\
& \geq \frac 12 \left[ x \Delta_B + y \Delta_A \right]\\
& \geq \frac 12 \left| x_a \cup x_b \right|.
\end{align*}
The set $x_a\cup x_b$ is again an adequate choice for the support of $e_1$, and $w_R(e+e_1)<|e|$ as before.
\item If $x > \bar{x}$ and $y > \bar{y}$, then by an argument symmetrical to the preceding case, exchanging $x$ with $\bar{x}$ and
$y$ with $\bar{y}$, we get:
\begin{align*}
\bar{\partial} \geq \frac 12 \left| \bar{x}_a \cup \bar{x}_b \right|.
\end{align*} 
We remark that we must have $|\bar{x}_a \cup \bar{x}_b |>0$, otherwise
$z,t\leq 1/3$ imply that $|x_a\cup x_b|>(\Delta_A+\Delta_B)/2$, which
contradicts our assumption \eqref{eq:reduced}. Therefore, choosing
$e_1$ to have support $\bar{x}_a \cup \bar{x}_b$ gives a strict
decrease in the syndrome weight that is at least $|e_1|/3$.
Now in this case, the {\em Hamming weight} $|e+e_1|$  is larger
than $|e|$. However we need to consider its {\em reduced weight}.
A vector equivalent modulo $C_Z^\perp$ to $e+e_1$ is $e+e_1'$
with the support of $e_1'$ being $x_a \cup x_b\cup\chi_a\cup\chi_b$.
The Hamming weight of $e+e_1'$ is 
  $$|e+e_1'|\leq |e| -(x+y)+(z+t).$$
But since $x+y>2/3$ and $z,t,<1/3$, we have $|e+e_1'|< |e|$,
so that $w_R(e+e_1)<|e|$.
\item Finally, in the remaining case we consider
 $\partial + \bar{\partial}$ to show that one of the two quantities,
$\partial$ or $\bar{\partial}$ is sufficiently large.
Adding \eqref{eq:partial} and \eqref{eq:partial2} yields
$$\frac{1}{\Delta_A \Delta_B} (\partial +\bar{\partial})
  \geq 2 x \bar{y} + 2 \bar{x} y - t(1-z) - z(1-t).
$$
We now observe that the negation of the condition
$$(x \leq \bar{x}\; \text{and}\; y \leq \bar{y})\quad\text{or}\quad
   (x > \bar{x}\; \text{and}\; y > \bar{y})$$
implies $2 x \bar{y} + 2 \bar{x} y\geq (x+\bar{x})(y+\bar{y})=(1-z)(1-t)$, from which we get:
 $$\frac{1}{\Delta_A \Delta_B} (\partial +\bar{\partial})
  \geq (1-z)(1-t) -  t(1-z) - z(1-t).$$
The formal equality
  $$(1-z)(1-t) -  z(1-z) - t(1-t) = 3 \left(\frac{2}{3} -z\right)\left( \frac{2}{3}-t\right) -\frac{1}{3}$$
yields therefore, when combined with \eqref{bound-z-t},
$$\frac{1}{\Delta_A \Delta_B} (\partial +\bar{\partial})
  \geq \frac{1}{3} \left(1 + \frac{1}{\Delta_A}\right)\left(1 + \frac{1}{\Delta_B} \right)-\frac{1}{3}$$
Hence, after rearranging,
\begin{align*}
\max \left\{ \partial,\bar{\partial} \right\}\geq \frac 12(\partial+\bar{\partial}) \geq \frac{1}{3} \frac{\Delta_{A} + \Delta_B}{2}.
\end{align*}
\begin{itemize}
\item if $\max \left\{ \partial,\bar{\partial}\right\}=\partial$,
then we choose $e_1$ to have support $x_a\cup x_b$, and
we get $\partial\geq \frac 13|e_1|$ from condition \eqref{eq:reduced}. We have $w_R(e+e_1)\leq |e|$ as in cases 1. and 2.
\item if $\max \left\{ \partial,\bar{\partial}\right\}=\bar{\partial}$,
then 
\begin{itemize}
\item either $|\bar{x}_a \cup \bar{x}_b|\leq \frac{\Delta_{A} + \Delta_B}{2}$, in which case we set $e_1$ to have support
$\bar{x}_a \cup \bar{x}_b$ and have  $\bar{\partial}\geq \frac 13|e_1|$
and $w_R(e+e_1)\leq |e|$ by the same argument as in case 3,
\item or $|\bar{x}_a \cup \bar{x}_b|> \frac{\Delta_{A} + \Delta_B}{2}$,
in which case we set $e_1$ to have support $x_a\cup x_b\cup\chi_a\cup\chi_b$ so as to again have $\bar{\partial}\geq \frac 13|e_1|$. The Hamming weight $|e+e_1|$ is at most $|e|-(x+y)+(z+t)$,
which is less than $|e|$ as in case 3, so that again $w_R(e+e_1)<|e|$.
\end{itemize}
\end{itemize}
\end{enumerate}

\end{document}